\documentclass[11pt]{llncs}

\usepackage{amssymb,amsmath,stmaryrd}
\usepackage{gastex}
\usepackage{paralist,fullpage}
\usepackage{comment}
\newcommand{\nat}{\mathbb{N}}
\newcommand{\integ}{\mathbb{Z}}

\newcommand{\mathN}{\mathbb{N}}
\newcommand{\integers}{\mathbb{Z}}

\newcommand{\MeanPayoff}{{\sf MeanPayoff}}
\newcommand{\EL}{{\sf EL}}
\newcommand{\MP}{{\sf MP}}
\newcommand{\Plays}{{\sf Plays}}
\newcommand{\Prefs}{{\sf Prefs}}

\newcommand{\Last}{{\sf Last}}
\newcommand{\outcome}{\mathsf{outcome}}

\def\abs#1{\ensuremath{\lvert #1\rvert}}
\newcommand{\tuple}[1]{\langle #1 \rangle}

%


\newcommand{\PosEnergy}{{\sf PosEnergy}}

\renewcommand{\qed}{\hfill \ensuremath{\Box}}

\newcommand{\Flow}{{\sf Flow}}

\newcommand{\Circuit}{{\sf Circuit}}

\sloppy

\title{Generalized Mean-payoff and Energy Games}
\author{Krishnendu Chatterjee\inst{1} \and Laurent Doyen\inst{2} \and Thomas A. Henzinger\inst{1} \and Jean-Fran\c{c}ois Raskin\inst{3}}

\institute{
IST Austria (Institute of Science and Technology Austria) \\
\and LSV, ENS Cachan \& CNRS, France \\
\and D\'epartment d'Informatique, Universit\'e Libre de Bruxelles (U.L.B.) 
}

\begin{document}

\maketitle

\begin{abstract}
In mean-payoff games, the objective of the protagonist is to ensure that the 
limit average of an infinite sequence of numeric weights is nonnegative. 
In energy games, the objective is to ensure that the running sum of weights 
is always nonnegative. Generalized mean-payoff and energy games replace 
individual weights by tuples, and the limit average (resp.\ running sum) 
of each coordinate must be (resp.\ remain) nonnegative. These games have 
applications in the synthesis of resource-bounded processes with multiple 
resources.

We prove the finite-memory determinacy of generalized energy games and show 
the inter-reducibility of generalized mean-payoff and energy games for 
finite-memory strategies. We also improve the computational complexity 
for solving both classes of games with finite-memory strategies: while 
the previously best known upper bound was EXPSPACE, and no lower 
bound was known, we give an optimal coNP-complete bound. 
For memoryless strategies, we show that the problem of deciding 
the existence of a winning strategy for the protagonist is NP-complete. 
\end{abstract}

\section{Introduction}

\noindent{\em Graph games and multi-objectives.} 
Two-player games on graphs are central in many applications
of computer science.  For example, in the synthesis problem, implementations are 
obtained from winning strategies in games with a qualitative objective such as 
$\omega$-regular specifications~\cite{RW87,PnueliR89,AbadiLW89}.
In these applications, the games have a qualitative (boolean) objective 
that determines which player wins. 
On the other hand, games with quantitative objective which are natural models 
in economics (where players have to optimize a real-valued payoff) 
have also been studied in the context of automated design~\cite{Sha53,Condon92,ZP96}.
In the recent past, there has been considerable interest in the design of 
reactive systems that work in resource-constrained environments 
(such as embedded systems). 
The specifications for such reactive systems are quantitative, and these 
give rise to quantitative games. 
In most system design problems, there is no unique objective to be optimized, 
but multiple, potentially conflicting objectives.
For example, in designing a computer system, one is interested not
only in minimizing the average response time but also the average power 
consumption.
In this work we study such multi-objective generalizations of the 
two most widely used quantitative objectives in games, namely,
\emph{mean-payoff} and 
\emph{energy} objectives~\cite{EM79,ZP96,CAHS03,BFLMS08}.

\smallskip\noindent{\em Generalized mean-payoff games.} 
A {\em generalized mean-payoff game} is played on a finite weighted game graph by two players. 
The vertices of the game graph are partitioned into positions that belong to 
Player~$1$ and positions that belong to Player~$2$. 
Edges of the graphs are labeled with $k$-dimensional vectors $w$ of integer values, i.e., 
$w \in \mathbb{Z}^k$. The game is played as follows. A pebble is placed on a 
designated initial vertex of the game graph. The game is played in rounds in which
the player owning the position where the pebble lies
moves the pebble to an adjacent position of the graph using an outgoing edge. 
The game is played for an infinite number of rounds, resulting in an infinite 
path through the graph, called a play. The value associated to a play is the 
mean value in each dimension of the vectors of weights labeling the edges of the play.
Accordingly, the winning condition for Player~1
is defined by a vector of integer values $v \in \mathbb{Z}^k$ that specifies a 
threshold for each dimension. A play is winning for Player~$1$ if its vector 
of mean values is at least~$v$. All other plays are winning for Player~$2$,
thus the game is zero-sum. We are interested in the problem of deciding
the existence of a finite-memory winning strategy for Player~$1$ in generalized 
mean-payoff games. Note that in general infinite memory may be required to win
generalized mean-payoff games, but for practical applications such as the 
synthesis of reactive systems with multiple resource constraints, the 
generalized mean-payoff games with finite memory is the relevant model.
Moreover, they provide the framework for the  
synthesis of specifications defined by~\cite{AlurDMW09,Concur10}, and 
the synthesis question for such specifications under \emph{regular (ultimately periodic)} 
words correspond to generalized mean-payoff games with finite-memory strategies.

\smallskip\noindent{\em Generalized energy games.} 
In generalized energy games, the winning condition for Player~1 requires that,
given an initial credit $v_0 \in \mathbb{N}^k$, the sum of $v_0$ and all the 
vectors labeling edges up to position $i$ in the play is nonnegative, for all 
$i \in \mathbb{N}$. The decision problem for generalized energy games
asks whether there exists an initial credit $v_0$ and a strategy for Player~1 
to maintain the energy nonnegative in all dimensions against all strategies 
of Player~2.

\smallskip\noindent{\em Contributions.}
In this paper, we study the strategy complexity and computational complexity 
of solving generalized mean-payoff and energy games. 
Our contributions are as follows.

\noindent
First, we show that generalized energy and mean-payoff games 
are determined when played with finite-memory strategies, however,  they are not 
determined for memoryless strategies. 
For generalized energy games determinacy under finite-memory coincides 
with determinacy under arbitrary strategies (each player has a
winning strategy iff he has a finite-memory winning strategy).
In contrast, we show for generalized mean-payoff games that determinacy under finite-memory 
and determinacy under arbitrary strategies do not coincide. 
Thus with finite-memory strategies these games are determined, they correspond 
to the synthesis question with ultimately periodic words, and enjoy pleasant 
mathematical properties like existence of the limit of the 
mean value of the weights, and hence we focus on the study of 
generalized mean-payoff and energy games with finite-memory strategies.

\noindent
Second, we show that under the hypothesis that both players play either 
finite-memory or memoryless strategies, the generalized mean-payoff game 
and the generalized energy game problems are equivalent.

\noindent
Third, our main contribution is the study of the computational complexity of the 
decision problems for generalized mean-payoff games and generalized energy games, 
both for finite-memory strategies and the special case of memoryless strategies. 
Our complexity results can be summarized as follows:
(A)~For finite-memory strategies, we provide a nondeterministic polynomial time 
algorithm 
for deciding negative instances of the problems\footnote{Negative  
instances are those where Player~1 is losing, and by determinacy under finite-memory where Player~2 is winning.}.
Thus we show that the decision problems are in coNP. 
This significantly improves the complexity as compared to the EXPSPACE 
algorithm that can be obtained by reduction to {\sc Vass} (vector addition systems with states)~\cite{BJK10}.
Furthermore, we establish a coNP lower bound for these problems by reduction 
from the complement of the  3SAT problem, hence showing that the problem
is coNP-complete.
(B)~For the case of memoryless strategies, as the games are not determined, we 
consider the problem of determining if Player 1 has a memoryless 
winning strategy. First, we show that the problem of determining if Player 1 
has a memoryless winning strategy is in NP, 
and then show that the problem is NP-hard (i)~even when the weights are restricted 
to $\{-1,0,1\}$; or
(ii)~when the weights are arbitrary and the dimension is~2.

\noindent{\em Related works.}
Mean-payoff games, which are the one-dimension version of our generalized 
mean-payoff games, have been extensively studied starting with the works of 
Ehrenfeucht and Mycielski in~\cite{EM79} where they prove memoryless determinacy 
for these games. Because of memoryless determinacy, it is easy to show that 
the decision problem for mean-payoff games lies in NP~$\cap$~coNP, 
but despite large research efforts, no polynomial time algorithm is known for 
that problem. A pseudo-polynomial time algorithm has been proposed by Zwick 
and Paterson in~\cite{ZP96}, and improved in~\cite{BCDGR09}. The one-dimension
special case of generalized energy games have been introduced in~\cite{CAHS03}
and further studied in~\cite{BFLMS08} where log-space equivalence with 
classical mean-payoff games is established.




Generalized energy games can be viewed as games played on {\sc Vass} (vector addition systems with states)
where the objective
is to avoid unbounded decreasing of the counters. A solution to such games 
on {\sc Vass} is provided in~\cite{BJK10} (see in particular Lemma 3.4 in~\cite{BJK10}) with a PSPACE 
algorithm when the weights are $\{-1,0,1\}$, leading to an EXPSPACE algorithm when the
weights are arbitrary integers.
We drastically improve the EXPSPACE upper-bound by providing a coNP
algorithm for the problem, and we also provide a coNP lower bound even when 
the weights are restricted to $\{-1,0,1\}$.

\section{Generalized Mean-payoff and Energy Games}\label{sec:def}

\noindent{\bf Well quasi-orders.} 
Let $D$ be a set. A relation $\preceq$ over $D$ is a {\em well quasi-order}, 
wqo for short, if the following holds: (a)~$\preceq$ is transitive and reflexive;
and (b)~for all $f : \mathbb{N} \rightarrow D$, there exists 
$i_1,i_2 \in \mathbb{N}$ such that $i_1 < i_2$ and $f(i_1) \preceq f(i_2)$.

\begin{lemma}\label{wqo}
$(\mathbb{N}^k,\leq)$ is well quasi-ordered. 
\end{lemma} 

\noindent{\bf Multi-weigthed two-player game structures.}
A {\em multi-weigthed two-player game structure} is a tuple 
$G=(S_1,S_2,s_{{\sf init}},E,k,w)$ where $S_1 \cap S_2 = \emptyset$, 
and $S_i$ ($i = 1,2$) is the finite set of {\em Player~$i$ positions}, 
$s_{{\sf init}} \in S_1$ is the {\em initial position}, 
$E \subseteq (S_1  \cup S_2) \times (S_1 \cup S_2)$ is 
the set of {\em edges} such that for all $s \in S_1 \cup S_2$, 
there exists $s' \in S_1 \cup S_2$ such that $(s,s') \in E$, $k \in \nat$ is 
the {\em dimension} of the multi-weights, $w : E \rightarrow \integ^k$ is the 
{\em multi-weight labeling function}. 
$G$ is a multi-weighted {\em one-player} game structure if $S_2 = \emptyset$.

A {\em play} in $G$ is an infinite sequence of $\pi=s_0 s_1 \dots s_n \dots$ 
such that $(i)$ $s_0=s_{{\sf init}}$, $(ii)$ for all $i \geq 0$ we have $(s_i,s_{i+1}) \in E$. 
A play $\pi=s_0 s_1 \dots s_n \dots$ is {\em ultimately periodic} if 
it can be decomposed as $\pi=\rho_1 \cdot \rho_2^{\omega}$ where $\rho_1$ and 
$\rho_2$ are two finite sequences of positions. The {\em prefix} up to position
$n$ of a play $\pi=s_0 s_1 \dots s_n \dots$ is the finite sequence 
$\pi(n)=s_0 s_1 \dots s_n$, its last element $s_n$ is denoted by $\Last(\pi(n))$. 
A prefix $\pi(n)$ belongs to Player~$i$ ($i \in \{1,2\}$) if $\Last(\pi(n)) \in S_i$. 
The set of plays in $G$ is denoted by $\Plays(G)$, 
the corresponding set of prefixes is denoted by $\Prefs(G)$, 
the set of prefixes that belongs to Player~$i$ ($i \in \{1,2\}$) is denoted 
by $\Prefs_i(G)$, and the set of ultimately periodic plays in $G$ is denoted by $\Plays^{up}(G)$. 

The {\em energy level vector} of a prefix of play $\rho=s_0 s_1 \dots s_n$ is 
$\EL(\rho)=\sum_{i=0}^{i=n-1} w(s_i,s_{i+1})$, and the {\em mean-payoff vector} 
of an  ultimately periodic play 
$\pi=s_0 s_1 \dots s_n \dots$ is $\MP(\pi)=\lim_{n\rightarrow \infty} \frac{1}{n} \EL(\pi(n))$.

\smallskip\noindent{\bf Strategies.}
A strategy for Player~$i$ ($i \in \{1,2\}$) in~$G$ is a function 
$\lambda_i : \Prefs_i(G) \rightarrow S_1 \cup S_{2}$ such that for all $\rho \in \Prefs_i(G)$ we have
$(\Last(\rho),\lambda_i(\rho)) \in E$. A play $\pi=s_0 s_1 \dots \in \Plays(G)$
is \emph{consistent} with a strategy $\lambda_i$ of Player~$i$ if 
$s_{j+1}=\lambda_i(s_0 s_1 \dots s_j)$ for all $j \geq 0$ such that $s_j \in S_i$.
The {\em outcome of a pair of strategies}, $\lambda_1$ for Player~1 and $\lambda_2$ 
for Player~2, is the (unique) play which is consistent with both $\lambda_1$ and $\lambda_2$.
We denote $\outcome_G(\lambda_1,\lambda_2)$ this outcome. 


A strategy $\lambda_1$ for Player~$1$ has {\em finite-memory} if it can be encoded 
by a deterministic Moore machine $(M,m_0,\alpha_u,\alpha_n)$ where $M$ is a finite 
set of states (the memory of the strategy), $m_0 \in M$ is the initial memory state, 
$\alpha_u: M \times (S_{1} \cup S_2) \to M$ is an update function, 
and $\alpha_n : M \times S_{i} \to S_{1} \cup S_2$ is the next-action function. 
If the game is in a Player-$1$ position $s \in S_1$ and $m \in M$ is the current memory value,
then the strategy chooses $s' = \alpha_n(m,s)$ as the next position and the memory 
is updated to $\alpha_u(m,s)$. Formally, $\tuple{M, m_0, \alpha_u, \alpha_n}$
defines the strategy $\lambda$ such that $\lambda(\rho\cdot s) = \alpha_n(\hat{\alpha}_u(m_0, \rho), s)$
for all $\rho \in (S_1 \cup S_2)^*$ and $s \in S_1$, where $\hat{\alpha}_u$ extends $\alpha_u$ to sequences
of positions as expected. A strategy is \emph{memoryless} if $\abs{M} = 1$.
For a finite-memory strategy $\lambda_1$ of Player~1, let $G_{\lambda_1}$ be the graph obtained as the product
of $G$ with the Moore machine defining $\lambda_1$, with initial vertex $\tuple{m_0,s_{{\sf init}}}$ 
and where $(\tuple{m,s},\tuple{m',s'})$ is a transition in $G_{\lambda_1}$ if 
$m' = \alpha_u(m,s)$, and either $s \in S_1$ and $s'=\alpha_n(m,s)$, or $s \in S_2$ and $(s,s') \in E$.
The set of inifinite paths in $G_{\lambda_1}$ and the set of plays consistent with
$\lambda_1$ coincide. A similar definition can be given for the case of Player~2.


\smallskip\noindent{\bf Objectives.}
An {\em objective} for Player~$1$ in $G$ is a set of plays $W \subseteq \Plays(G)$. 
A strategy $\lambda_1$ for Player~1 is {\em winning} for $W$ in $G$ if for all plays in 
$\pi \in \Plays(G)$ that are consistent with $\lambda_1$, we have that $\pi \in W$. 
A strategy $\lambda_2$ for Player~2 is {\em spoiling} for $W$ in $G$ if for all plays in 
$\pi \in \Plays(G)$ that are consistent with $\lambda_2$, we have that $\pi \not\in W$. 
We consider the following objectives:
 \begin{itemize}
 	\item {\em Multi Energy objectives}. Given an initial energy vector 
$v_0 \in \nat^k$, the {\em multi energy objective} 
$\PosEnergy_G(v_0)=\{ \pi \in \Plays(G) \mid \forall n \geq 0 : v_0 + \EL(\pi(n)) \in \nat^k \}$ 
requires that the energy level in all dimensions remains always nonnegative.
	\item {\em Multi Mean-payoff objectives}. Given a threshold vector 
$v \in \mathbb{Z}^k$, the {\em multi mean-payoff objective} 
$\MeanPayoff_G(v)=\{ \pi \in \Plays^{up}(G) \mid  \MP(\pi) \geq v \}$ 
requires for all dimensions $j$ the mean-payoff for dimension $j$ is at least 
$v(j)$.
 \end{itemize}

\smallskip\noindent{\bf Decision problems.}
We consider the following decision problems:
\begin{itemize}
\item The {\em unknown initial credit problem} asks, given an 
multi-weighted two-player game structure $G$, to decide whether there exists
an initial credit vector $v_0 \in \nat^k$ and a winning strategy $\lambda_1$ 
for Player~1 for the objective  $\PosEnergy_G(v_0)$.
\item The {\em mean-payoff threshold problem} (for finite memory) asks, given an multi-weighted two-player 
game structure $G$ and a threshold vector $v \in \mathbb{Z}^k$, to decide whether there 
exists a \emph{finite-memory} strategy $\lambda_1$ for Player~1 such that for all 
\emph{finite-memory} strategies 
$\lambda_2$ of Player~2, $\outcome_G(\lambda_1,\lambda_2) \in \MeanPayoff_G(v)$.
 \end{itemize}

\noindent Note that in the unknown initial credit problem, we allow arbitrary strategies
(and we show in Theorem~\ref{thrm_gen_energy_fin} that actually finite-memory strategies  are sufficient),
while in the mean-payoff threshold problem, we require finite-memory strategy
which is restriction (according to Theorem~\ref{thrm_gen_mean}) of a more general problem 
of deciding the existence of arbitrary winning strategies.

\smallskip\noindent{\bf Determinacy and determinacy under finite-memory.} 
A game $G$ with an objective $W$ is \emph{determined} if either Player~1 has
a winning strategy, or Player~2 has a spoiling strategy.
A game $G$ with an objective $W$ is \emph{determined under finite-memory} if either 
(a)~Player~1 has a \emph{finite-memory} strategy $\lambda_1$ such that for all 
\emph{finite-memory} strategies $\lambda_2$ of Player~2, 
we have $\outcome_G(\lambda_1,\lambda_2) \in W$; or
(b)~Player~2 has a \emph{finite-memory} strategy $\lambda_2$ such that for all 
\emph{finite-memory} strategies $\lambda_1$ of Player~1, 
we have $\outcome_G(\lambda_1,\lambda_2) \not\in W$.
Games with objectives $W$ are determined (resp. determined under finite-memory) if 
all game structures with objectives $W$ are determined (resp. determined under finite-memory).
We say that determinacy and determined under finite-memory coincide for a 
class of objectives, 
if for all objectives in the class and all game structures,  
the answer of the determinacy and determined under finite-memory 
coincide (i.e., Player~1 has a winning strategy iff there is a finite-memory 
winning strategy, and similarly for Player~2). 
Generalized mean-payoff and energy objectives are measurable: 
(a)~generalized mean-payoff objectives can be expressed as finite intersection 
of mean-payoff objectives and mean-payoff objectives are complete 
for the third level of Borel hierarchy~\cite{ChaTCS}; and 
(b) generalized energy objectives can be expressed as finite intersection 
of energy objectives, and enery objectives are closed sets.   
Hence determinacy of generalized mean-payoff and energy games follows 
from the result of~\cite{Mar75}.

\begin{theorem}[Determinacy~\cite{Mar75}]
Generalized mean-payoff and energy games are determined. 
\end{theorem}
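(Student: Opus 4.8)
The plan is to invoke Martin's Borel determinacy theorem~\cite{Mar75}, which states that every two-player game with a Borel objective is determined. Thus it suffices to establish that the generalized mean-payoff and energy objectives lie within the Borel hierarchy, and indeed the excerpt has already sketched exactly this reduction. The approach has two parts, one for each objective class, and in both cases the key observation is that a $k$-dimensional objective is simply the finite intersection of $k$ one-dimensional objectives, so that membership in Borel reduces to the one-dimensional case together with the closure of Borel sets under finite intersection.

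First I would handle the energy objectives. For a fixed dimension $j$ and initial credit, the one-dimensional energy objective requires that the running sum stay nonnegative at every finite prefix; this is a countable intersection of conditions each depending only on a finite prefix of the play, hence a closed set in the standard product topology on $\Plays(G)$ (equivalently, it is the complement of the open set of plays that eventually violate the bound in some prefix). A finite intersection of closed sets is closed, so $\PosEnergy_G(v_0)$ is closed and in particular Borel. Second I would handle the mean-payoff objectives, recalling from~\cite{ChaTCS} that a one-dimensional mean-payoff threshold objective sits at the third level of the Borel hierarchy (it is defined through a $\liminf$/$\limsup$ of averages, which introduces the alternating $\forall\exists\forall$ quantification over prefixes characteristic of that level). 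Again, the generalized objective $\MeanPayoff_G(v)$ is the finite intersection over the $k$ dimensions of such sets, and since each level of the Borel hierarchy is closed under finite intersection, the whole objective remains Borel.

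Having placed both objectives inside the Borel $\sigma$-algebra, I would then apply Martin's theorem directly to conclude that the corresponding games are determined: for every game structure $G$ and every such objective $W$, either Player~1 has a winning strategy ensuring every consistent play lies in $W$, or Player~2 has a spoiling strategy ensuring every consistent play avoids $W$. Since this holds for all game structures with these objectives, the classes of generalized mean-payoff and energy games are determined in the sense defined in the excerpt.

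The main obstacle here is not the logical structure of the argument, which is essentially a citation assembled from two off-the-shelf facts, but ensuring the topological bookkeeping is stated cleanly: one must fix the right topology on infinite plays (the product of the discrete topology on positions, making cylinder sets clopen), verify that the prefix-indexed energy conditions genuinely form a countable intersection of clopen sets, and correctly import the Borel level of the one-dimensional mean-payoff objective from~\cite{ChaTCS}. None of these steps is hard, but they are the places where the proof could go wrong if the measurability claims were mis-stated; once they are in place, Martin's theorem closes the argument immediately.
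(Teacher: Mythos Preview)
Your proposal is correct and follows essentially the same approach as the paper: the paper's argument, given in the paragraph immediately preceding the theorem, is precisely to observe that generalized energy objectives are finite intersections of closed sets and generalized mean-payoff objectives are finite intersections of sets at the third level of the Borel hierarchy (citing~\cite{ChaTCS}), and then to invoke Martin's Borel determinacy~\cite{Mar75}. Your write-up simply fleshes out the topological details behind those two observations.
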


\section{Determinacy under Finite-memory and Inter-reducibility}

In this section, we establish four results. First, we show that to win generalized energy games, it is sufficient for Player 1 to 
play {\em finite-memory strategies}. Second, we show that to spoil generalized energy games, it is sufficient for Player 2 
to play {\em memoryless strategies}. As a consequence, generalized energy games are determined under finite-memory. 
Third, using this finite-memory determinacy result, we show that the decision problems for generalized energy and 
mean-payoff games (see Section~\ref{sec:def}) are log-space inter-reducible. Finally, 
we show that infinite-memory strategies are more powerful than finite-memory strategies in generalized mean-payoff games.

For generalized energy games, we first show that finite-memory strategies are sufficient for Player~1,
and then that memoryless strategies are sufficient for Player~2.

\begin{lemma}
\label{lem:energy-player1-finite-memory}
For all multi-weighted two-player game structures $G$, the answer to the unknown 
initial credit problem is {\sc Yes} iff there exists a initial credit 
$v_0 \in \mathbb{N}^k$ and a finite-memory strategy $\lambda^{\sf FM}_1$ 
for Player 1 such that for all strategies $\lambda_2$ of Player 2, 
$\outcome_G(\lambda^{\sf FM}_1,\lambda_2) \in \PosEnergy_G(v_0)$.
\end{lemma}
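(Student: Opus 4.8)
The forward direction is immediate: a finite-memory winning strategy is in particular a winning strategy, so its existence witnesses a {\sc Yes} answer to the unknown initial credit problem. The content is the converse, and the plan is to \emph{fold} an arbitrary winning strategy into a finite-memory one by a self-covering tree argument that exploits the well quasi-order of Lemma~\ref{wqo}.

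I start by fixing an initial credit $v_0 \in \nat^k$ and an arbitrary winning strategy $\lambda_1$ for $\PosEnergy_G(v_0)$, and I consider the unfolding of $G$ under $\lambda_1$: a tree whose nodes are the prefixes consistent with $\lambda_1$, where a Player-$1$ node has the single child prescribed by $\lambda_1$ and a Player-$2$ node has one child per outgoing edge. Each node $\rho$ is labelled by the pair $(\Last(\rho),\, v_0 + \EL(\rho)) \in (S_1 \cup S_2) \times \nat^k$, the second component lying in $\nat^k$ precisely because $\lambda_1$ is winning. This tree is finitely branching since $G$ is finite.

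Next I prune: I declare a node $n$ \emph{covering} if it has a strict ancestor $m$ with the same position $\Last(m)=\Last(n)$ and pointwise-dominated energy $\EL(m) \le \EL(n)$, and I cut each branch at its first covering node, giving a tree $T$. I claim $T$ is finite. Along any infinite branch some position of the finite set $S_1 \cup S_2$ recurs infinitely often; applying the wqo property of $(\nat^k,\le)$ from Lemma~\ref{wqo} to the corresponding subsequence of energy labels yields indices $i_1 < i_2$ with $\EL$-values in the $\le$ relation, hence a covering pair. So no branch of $T$ is infinite, and since $T$ is finitely branching, K\"onig's lemma forces $T$ to be finite.

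From $T$ I build a Moore machine whose memory states are the nodes of $T$: at an internal node I play exactly as in the tree, and whenever the play reaches a covering leaf $n$ with witnessing ancestor $m$ I reset the memory to $m$ and resume. This defines a genuine finite-memory strategy $\lambda^{\sf FM}_1$, and every Player-$2$ choice is catered for because the tree branches on all outgoing edges at Player-$2$ nodes. The delicate step, which I expect to be the main obstacle, is showing that $\lambda^{\sf FM}_1$ is still winning from the same credit $v_0$. The argument is monotonicity bookkeeping: I decompose any consistent play into the segments delimited by successive resets; each reset from $n$ to $m$ contributes a nonnegative surplus $\EL(n)-\EL(m) \ge 0$, and an induction on the segments shows that the true energy level at any position equals its tree label $v_0 + \EL(\cdot)$ plus the accumulated surplus, hence stays in $\nat^k$. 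This uses only that $\PosEnergy_G$ is monotone in the initial credit, so that looping back to a pointwise-smaller ancestor is always safe.
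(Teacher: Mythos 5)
Your proof follows essentially the same route as the paper's: unfold $G$ under the winning strategy $\lambda_1$, label nodes with $(\Last(\rho), v_0+\EL(\rho))$, use the wqo of Lemma~\ref{wqo} to cut every branch at the first node subsuming an ancestor, conclude finiteness of the pruned tree, and fold it into a Moore machine that loops back from covering leaves to their witnessing ancestors, with monotonicity of the energy objective guaranteeing correctness. In fact your writeup makes explicit two points the paper leaves implicit (K\"onig's lemma for finiteness, and the surplus-accumulation induction showing the folded strategy stays winning from the same credit $v_0$), so it is a valid and slightly more detailed rendering of the same argument.
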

\begin{proof}  
One direction is trivial. For the other direction, assume that $\lambda_1$ is a 
(not necessary finite-memory) winning strategy for Player~1 in $G$ with initial 
credit $v_0 \in \mathbb{N}^k$. We show how to construct from $\lambda_1$ 
a finite-memory strategy $\lambda_1^{\sf FM}$ which is winning  against all 
strategies of Player~2 for initial credit $v_0$. For that we consider the 
unfolding of the game graph $G$ in which Player~1 plays according to $\lambda_1$. 
This infinite tree, noted $T_{G(\lambda_1)}$, has as set of nodes all the 
prefixes of plays in $G$ when Player~1 plays according to $\lambda_1$. 
We associate to each node $\rho=s_0 s_1 \dots s_n$ in this tree the energy 
vector $v_0+\EL(\rho)$. As $\lambda_1$ is winning, we have that 
$v_0+\EL(\rho) \in \mathbb{N}^k$ for all $\rho$. Now, consider the set 
$(S_1 \cup S_2) \times \mathbb{N}^k$, and the relation $\sqsubseteq$ on this set 
defined as follows: $(s_1,v_1) \sqsubseteq (s_2,v_2)$ iff $s_1=s_2$ and 
$v_1 \leq v_2$ i.e., for all $i$, $1 \leq i \leq k$, $v_1(i) \leq v_2(i)$. 
The relation $\sqsubseteq$ is a wqo (easy consequence of Lemma~\ref{wqo}). 
As a consequence, on every infinite branch $\pi=s_0 s_1 \dots s_n \dots$ 
of $T_{G(\lambda_1)}$, there exists two positions $i < j$ such that 
$\Last(\pi(i))=\Last(\pi(j))$ and $\EL(\pi(i)) \leq \EL(\pi(j))$.  
We say that node $j$ subsumes node $i$. Now, let $T^{\sf FM}_{G(\lambda_1)}$ 
be the tree $T_{G(\lambda_1)}$ where we stop each branch when we reach a node 
$n_2$ which subsumes one of its ancestor node $n_1$. 
Clearly, $T^{\sf FM}_{G(\lambda_1)}$ is finite. 
Also, it is easy to see that Player~1 can play in the subtree rooted at $n_2$ as 
she plays in the subtree rooted in $n_1$ because its energy level in $n_2$ is 
greater than in $n_1$.  From $T^{\sf FM}_{G(\lambda_1)}$, we can construct a 
Moore machine which encode a finite-memory strategy $\lambda_1^{\sf FM}$ which 
is winning the generalized energy game $G$ as it is winning for initial energy 
level $v_0$.
\end{proof}

\begin{lemma}\cite{BJK10}
\label{lem:player-two-memoryless}
For all multi-weigthed two-player game structures $G$, the answer to the 
unknown initial credit problem is {\sc No} if and only if there exists
a \emph{memoryless} strategy $\lambda_2$ for Player~$2$, such that
for all initial credit vectors $v_0 \in \nat^k$ and all strategies $\lambda_1$ 
for Player~$1$ we have $\outcome_G(\lambda_1,\lambda_2) \not\in \PosEnergy_G(v_0)$.
\end{lemma}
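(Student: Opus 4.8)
The plan is to prove the two directions separately; the forward implication is immediate, so the work lies in showing that a {\sc No} answer yields a single positional spoiling strategy. Let $W \subseteq (S_1\cup S_2)\times\nat^k$ denote the set of configurations $(s,v)$ from which Player~1 has a winning strategy for the multi energy objective when started at $s$ with credit $v$. The set $W$ is upward closed in the credit component: more credit never hurts Player~1, so $(s,v)\in W$ and $v\le v'$ imply $(s,v')\in W$. By Lemma~\ref{wqo} each slice $\mathrm{Win}(s)=\{v\mid (s,v)\in W\}$ therefore has finitely many minimal elements. Writing $L=\{s\mid \mathrm{Win}(s)=\emptyset\}$ for the states from which Player~1 loses for \emph{every} initial credit, the answer to the unknown initial credit problem is {\sc No} exactly when $s_{\sf init}\in L$.

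The first step is to observe that $L$ is a trap that Player~2 can enforce positionally. If $s\in S_1\cap L$ then every successor of $s$ lies in $L$: otherwise some successor $s'$ has $(s',u)\in W$, and choosing a nonnegative credit $v$ with $v+w(s,s')\ge u$ (possible since $v$ may be taken arbitrarily large) would put $(s,v)\in W$, contradicting $s\in L$. Dually, if $s\in S_2\cap L$ then $s$ has at least one successor in $L$: if every successor $s'$ had a minimal winning credit $m_{s'}$ (finitely many to consider, by Lemma~\ref{wqo}), then any nonnegative $v\ge\max_{s'}(m_{s'}-w(s,s'))$ would keep all successors inside $W$ and force $(s,v)\in W$, again a contradiction. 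Fixing at each Player~2 state of $L$ one successor inside $L$ defines a memoryless strategy $\lambda_2$ under which every play from $s_{\sf init}$ stays within $L$ forever.

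Confinement to $L$ is not yet spoiling, and this is where I expect the main obstacle to be: staying in $L$ guarantees that every visited state is losing for a \emph{fresh} credit, but along the actual play the accumulated energy could in principle remain nonnegative, since against the fixed $\lambda_2$ Player~1 faces fewer threats than in the full game. The crux is therefore to refine the choice inside $L$ so that some coordinate is driven to $-\infty$ against every Player~1 strategy and every initial credit; equivalently, to pick $\lambda_2$ so that the one-player graph $G_{\lambda_2}$ has no reachable cycle (or nonnegative combination of cycles) that is nonnegative in every coordinate, which is the exact condition under which Player~1 could survive with some finite credit once $\lambda_2$ is fixed. The difficulty is uniformity across credits: a single-credit attractor toward a negative configuration is positional on the configuration graph but not on $G$, and a finite union of downward-closed successor regions need not collapse to one successor, so positionality on states does not come for free. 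I would resolve this by an attractor/fixpoint computation on $L$ toward the unbounded-decrease region, using the finiteness of minimal elements from Lemma~\ref{wqo} to make the selected move independent of the current credit; this is precisely the argument of~\cite{BJK10}. The converse direction is trivial: if such a $\lambda_2$ exists then no $v_0$ and no $\lambda_1$ can keep $\outcome_G(\lambda_1,\lambda_2)$ in $\PosEnergy_G(v_0)$, so the answer is {\sc No}.
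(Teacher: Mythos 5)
Your forward direction and your trap construction are both correct: $W$ is upward closed, each slice has finitely many minimal elements by Lemma~\ref{wqo}, $L$ is closed under all Player-1 moves and contains a successor of each of its Player-2 states, so Player~2 indeed has a memoryless strategy confining the play to $L$. But the proof stops exactly at the point you yourself identify as the crux, and what you offer there is not an argument. Confinement to $L$ is strictly weaker than spoiling: $L$ is defined by losing in the \emph{full} game, and once Player~2 commits to a single positional strategy, Player~1 faces a much weaker opponent (this is the phenomenon illustrated by \figurename~\ref{fig:memory-needed} in the paper). The entire mathematical content of the lemma is the uniformity step: exhibiting \emph{one} successor choice per Player-2 state that defeats \emph{every} initial credit simultaneously. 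Your proposed resolution --- an ``attractor/fixpoint computation on $L$ toward the unbounded-decrease region'' --- never defines that region, never specifies the fixpoint, and never proves that the selected move can be made independent of the current credit; the appeal to finiteness of minimal elements is a hope, not a proof. As you correctly observe, a spoiling strategy for a fixed credit is positional only on the infinite configuration graph $(S_1\cup S_2)\times\nat^k$, so projecting it down to $G$ uniformly in the credit is precisely the difficulty --- but naming the difficulty does not discharge it.

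Moreover, your claim that this fixpoint construction ``is precisely the argument of~\cite{BJK10}'' mischaracterizes that proof, which is what the paper actually invokes. The argument of~\cite{BJK10} (Lemma~19) is an induction on the number of Player-2 choices built on an additive-credit combination lemma: if credit $v_0'$ suffices for Player~1 when Player~2 always chooses $s'$ at a state $s$, and $v_0''$ suffices when Player~2 always chooses $s''$, then $v_0'+v_0''$ suffices against arbitrary alternation between $s'$ and $s''$, because a surplus $\Delta$ in the initial credit is preserved as a surplus $\Delta$ in the energy level forever. Contrapositively, if Player~1 loses for every credit in $G$, then he already loses for every credit in one of the games where Player~2's choice at $s$ is frozen, and the induction bottoms out in a memoryless spoiling strategy. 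To complete your write-up you would need either to carry out this induction, or to actually construct and verify the fixpoint you allude to (including termination and credit-independence of the selected moves); as it stands, the key step of the lemma is missing.
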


\begin{proof}  
The proof was given in~\cite{BJK10}[Lemma 19]. Intuitively, consider a Player-$2$ 
state $s \in S_2$ with two sucessors $s',s''$. 
If an initial credit vector $v'_0$ is sufficient for Player~$1$ to win
against Player-$2$ always choosing $s'$, and $v''_0$ is sufficient against  
Player-$2$ always choosing $s''$, then $v'_0+v''_0$ is sufficient against 
Player-$2$ arbitrarily alternating between $s'$ and $s''$. This is because
of the fact that if Player~$1$ maintains all energies nonnegative when initial credit
is $v_0$, then he can maintain all energies above $\Delta$ when initial credit
is $v_0 + \Delta$ ($\Delta \in \nat^k$).
\end{proof}

\noindent As a consequence of the two previous lemmas, we have the following theorem.

\begin{theorem}\label{thrm_gen_energy_fin}
Generalized energy games are determined under finite-memory, and determinacy 
coincide with determinacy under finite-memory for generalized energy games.
\end{theorem}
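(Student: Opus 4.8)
The plan is to derive both assertions directly by assembling Lemma~\ref{lem:energy-player1-finite-memory} and Lemma~\ref{lem:player-two-memoryless}, which together exhaust the two complementary answers to the unknown initial credit problem. First I would observe that this problem has a well-defined answer, {\sc Yes} or {\sc No}, and argue each case. Notably, no appeal to the general (infinite-memory) determinacy theorem is needed, since the two lemmas already partition all instances into these two cases.

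In the {\sc Yes} case, Lemma~\ref{lem:energy-player1-finite-memory} supplies an initial credit $v_0$ and a finite-memory strategy $\lambda_1^{\sf FM}$ that is winning for $\PosEnergy_G(v_0)$ against \emph{all} strategies of Player~2. In particular it wins against all finite-memory strategies of Player~2, so condition~(a) of determinacy under finite-memory holds. This simultaneously establishes the coincidence for Player~1: Player~1 has a (possibly infinite-memory) winning strategy for some initial credit precisely when the answer is {\sc Yes}, and in that event Lemma~\ref{lem:energy-player1-finite-memory} already provides a finite-memory one.

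In the {\sc No} case, Lemma~\ref{lem:player-two-memoryless} yields a \emph{memoryless} strategy $\lambda_2$ for Player~2 that, for every initial credit $v_0$ and every strategy of Player~1, forces the outcome out of $\PosEnergy_G(v_0)$. Since a memoryless strategy is in particular finite-memory and it defeats all strategies of Player~1 (hence all finite-memory ones), condition~(b) of determinacy under finite-memory holds. This also yields the coincidence for Player~2: a spoiling strategy (uniform over all initial credits) exists exactly when the answer is {\sc No}, and then Lemma~\ref{lem:player-two-memoryless} furnishes a memoryless, and thus finite-memory, spoiling strategy.

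Because the two cases are exhaustive and mutually exclusive, in every instance exactly one of conditions~(a),(b) holds, which is precisely determinacy under finite-memory; and the two coincidence observations together give that determinacy and determinacy under finite-memory coincide for this class. I do not expect any real obstacle at this stage: all the genuine technical content---the well-quasi-order subsumption argument bounding Player~1's memory in Lemma~\ref{lem:energy-player1-finite-memory}, and the credit-additivity argument collapsing Player~2 to memoryless in Lemma~\ref{lem:player-two-memoryless}---has already been done, so the theorem is merely their combination.
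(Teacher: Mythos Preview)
Your proposal is correct and matches the paper's approach exactly: the paper simply states the theorem ``as a consequence of the two previous lemmas,'' and your write-up spells out precisely how Lemmas~\ref{lem:energy-player1-finite-memory} and~\ref{lem:player-two-memoryless} combine via the {\sc Yes}/{\sc No} dichotomy of the unknown initial credit problem. Your additional remark that Martin's determinacy theorem is not needed here (because both lemmas are stated as equivalences with the decision problem) is a valid and worthwhile observation that the paper leaves implicit.
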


\begin{remark}
Note that even if Player 2 can be restricted to play memoryless strategies in generalized energy games, it may be that Player~$1$ is winning with some initial 
credit vector $v_0$ when Player~$2$ is memoryless, and is not winning with 
the same initial credit vector $v_0$ when Player~$2$ can use arbitrary strategies.
This situation is illustrated in \figurename~\ref{fig:memory-needed} where Player~$1$
(owning round states) can maintain the energy nonegative in all dimensions 
with initial credit $(2,0)$ when Player~$2$ (owning square states) is memoryless.
Indeed, either Player~$2$ chooses the left edge from $q_0$ to $q_1$ and Player~$1$ wins, 
or Player~$2$ chooses the right edge from $q_0$ to $q_2$, and Player~$1$ wins as well by 
alternating the edges back to $q_0$. Now, if Player~$2$ has memory, then Player~2 wins
by choosing first the right edge to $q_2$, which forces Player~$1$ to come back
to $q_0$ with multi-weight $(-1,1)$. The energy level is now $(1,1)$ in $q_0$ and Player~$2$
chooses the left edge to $q_1$ which is losing for Player~$1$. Note that Player~$1$
wins with initial credit $(2,1)$ and $(3,0)$ (or any larger credit) against all 
arbitrary strategies of Player~$2$.
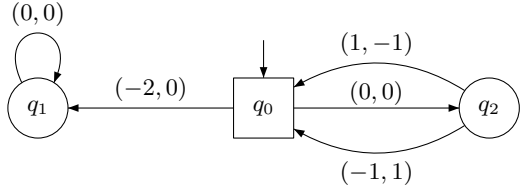
\begin{figure}[!tb]
 \begin{center}
   \begin{picture}(75,28)(0,0)


\node[Nmarks=i, iangle=90, Nmr=0](n0)(40,12){$q_0$}
\node[Nmarks=n](n1)(10,12){$q_1$}
\node[Nmarks=n](n2)(70,12){$q_2$}

\drawloop[ELside=l,loopCW=y, loopdiam=6](n1){$(0,0)$}


\drawedge[ELpos=50, ELside=l, ELdist=0.5, curvedepth=0](n0,n2){$(0,0)$}
\drawedge[ELpos=50, ELside=l, curvedepth=6](n2,n0){$(-1,1)$}
\drawedge[ELpos=50, ELside=r, curvedepth=-6](n2,n0){$(1,-1)$}

\drawedge[ELpos=50, ELside=r, curvedepth=0](n0,n1){$(-2,0)$}


\end{picture}
   \caption{Player~$1$ (round states) wins with initial credit $(2,0)$ when Player~$2$ (square states) can use memoryless strategies, 
but not when Player~$2$ can use arbitrary strategies.   \label{fig:memory-needed}}
 \end{center}
\end{figure}
\end{remark}

We now  show that generalized mean-payoff games (where players are restricted to play finite-memory strategies by definition) 
are log-space equivalent to generalized energy games. 
First note that the mean-payoff threshold problem with threshold vector $v \in \integ^k$ can 
be reduced to the mean-payoff threshold problem with threshold vector $\{0\}^k$,
by shifting all multi-weights in the game graph by $v$ (which has the effect of 
shifting the mean-payoff value by $v$). Given this reduction, the following result
shows that the unknown initial credit problem (for multi-energy games) and
the mean-payoff threshold problem (with finite-memory strategies) are equivalent.

\begin{theorem}\label{thrm_inter}
\label{lem:energy-mean-payoff-reduction}
For all multi-weigthed two-player game structures $G$ with dimension $k$, the answer to the 
unknown initial credit problem is {\sc Yes} if and only if the answer to the 
mean-payoff threshold problem (for finite memory) with threshold vector $\{0\}^k$ is {\sc Yes}.
\end{theorem}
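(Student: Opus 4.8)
The plan is to establish both directions of the equivalence by relating the energy level vectors $\EL(\pi(n))$ of a play to its asymptotic mean-payoff $\MP(\pi)$, exploiting the fact that in both problems the relevant plays can be taken to be ultimately periodic and generated by finite-memory strategies. The conceptual bridge is this: the energy objective $\PosEnergy_G(v_0)$ demands that the running sum never drops below zero in any coordinate, which is a strengthening of the mean-payoff condition $\MP(\pi)\ge 0$; conversely, a mean-payoff of at least $0$ on a cycle forces the energy level to be bounded below along that cycle, so a sufficiently large constant initial credit absorbs the finite transient dips.

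For the forward direction (\textsc{Yes} for unknown initial credit $\Rightarrow$ \textsc{Yes} for the threshold problem), I would start from a finite-memory winning strategy $\lambda_1^{\sf FM}$ for Player~1 with some initial credit $v_0$, which exists by Lemma~\ref{lem:energy-player1-finite-memory}. I claim the \emph{same} strategy witnesses the mean-payoff threshold $\{0\}^k$. Fix any finite-memory Player-2 strategy $\lambda_2$; the outcome $\pi=\outcome_G(\lambda_1^{\sf FM},\lambda_2)$ is ultimately periodic since it is an infinite path in the finite product graph $G_{\lambda_1^{\sf FM}}$ further restricted by $\lambda_2$. Because $\lambda_1^{\sf FM}$ keeps $v_0+\EL(\pi(n))\in\nat^k$ for all $n$, the partial sums $\EL(\pi(n))$ are bounded below by $-v_0$ in every coordinate, so dividing by $n$ and taking the limit gives $\MP(\pi)\ge \lim_{n} \frac{-v_0}{n}=0$ in each dimension, i.e.\ $\pi\in\MeanPayoff_G(\{0\}^k)$.

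For the converse (\textsc{Yes} for the threshold problem $\Rightarrow$ \textsc{Yes} for unknown initial credit), I would take a finite-memory strategy $\lambda_1$ witnessing $\MP\ge\{0\}^k$ against all finite-memory Player-2 strategies, and argue that it keeps the energy bounded below by a constant, so a large enough uniform initial credit makes it a winning energy strategy. The key step is to examine the finite product graph $G_{\lambda_1}$: every simple cycle reachable in this graph must have nonnegative total weight in every coordinate. Indeed, if some reachable simple cycle $C$ had a strictly negative sum in dimension $j$, Player~2 (who controls the choices left open in $G_{\lambda_1}$) could use a finite-memory strategy that drives the play to $C$ and then loops $C$ forever; the resulting ultimately periodic outcome would have $\MP(\pi)(j)<0$, contradicting that $\lambda_1$ wins the threshold problem. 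Since every reachable simple cycle is nonnegative in all coordinates, the energy level $\EL(\rho)$ along any consistent prefix $\rho$ can drop below zero only by a bounded amount --- bounded by the largest weight magnitude times the number of states of $G_{\lambda_1}$, accounting for the acyclic "transient" part of any path decomposition. Choosing $v_0$ componentwise at least this bound yields $v_0+\EL(\rho)\in\nat^k$ for all consistent $\rho$, so $\lambda_1$ is winning for $\PosEnergy_G(v_0)$.

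The main obstacle is the cycle-analysis in the converse direction: one must argue carefully that a reachable negative cycle in $G_{\lambda_1}$ can genuinely be exploited by a \emph{finite-memory} Player-2 strategy (this is where the finite-memory restriction on Player~2 in the mean-payoff problem is essential and benign, since reaching and looping a cycle needs only finite memory), and that nonnegativity of all reachable simple cycles indeed bounds the energy deficit uniformly along arbitrarily long consistent paths. The latter is a standard path-decomposition argument --- decompose any prefix into simple cycles plus a simple path, note the cycles contribute nonnegatively and only the bounded acyclic remainder can be negative --- but it should be stated explicitly rather than left implicit, as it is precisely the quantitative content that converts an asymptotic average guarantee into a finite uniform credit. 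The log-space complexity of both reductions is immediate, as each merely shifts weights by $v$ (handled by the preceding remark) and otherwise leaves the game graph unchanged. \qed
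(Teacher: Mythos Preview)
Your proposal is correct and follows essentially the same approach as the paper: both directions hinge on the observation that in the finite product graph $G_{\lambda_1}$ every reachable simple cycle must have nonnegative weight in each coordinate, and that this yields a uniform bound $v_0=\{nW\}^k$ on the energy deficit via path decomposition into simple cycles plus a short acyclic remainder. Your forward direction is a minor (and slightly cleaner) variation: rather than re-running the cycle analysis, you argue directly that $\EL(\pi(n))\ge -v_0$ forces $\MP(\pi)\ge 0$ for ultimately periodic $\pi$, which avoids the detour through $G_{\lambda_1}$ in that half.
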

\begin{proof}  
First, assume that there exists a winning strategy $\lambda_1$ for Player~$1$
in~$G$ for the multi energy objective $\PosEnergy_G(v_0)$ (for some $v_0$). 
Theorem~\ref{thrm_gen_energy_fin} establishes that finite memory is sufficient to win
multi-energy games, so we can assume that $\lambda_1$ has finite memory.
Consider the restriction of the graph $G_{\lambda_1}$ to the reachable vertices,
and we show that the energy vector of every simple cycle is nonnegative. By contradcition,
if there exists a simple cycle with energy vector negative in one dimension,
then the infinite path that reaches this cycle and loops through it forever
would violate the objective $\PosEnergy_G(v_0)$ regardless of the vector $v_0$.

Now, this shows that every reachable cycle in $G_{\lambda_1}$ has nonnegative
mean-payoff value in all dimensions, hence $\lambda_1$ is winning for the 
multi mean-payoff objective $\MeanPayoff_G(\{0\}^k)$.

Second, assume that there exists a finite-memory strategy $\lambda_1$ for Player~$1$
that is winning in~$G$ for the multi mean-payoff objective $\MeanPayoff_G(\{0\}^k)$. 
By the same argument as above, all simple cycles in $G_{\lambda_1}$ are nonnegative 
and the strategy $\lambda_1$ is also winning for the objective $\PosEnergy_G(v_0)$ 
for some $v_0$. Taking $v_0 = \{n W\}^k$ where $n$ is the number of states 
in~$G_{\lambda_1}$ (which bounds the length of the acyclic paths) and $W \in \integers$ 
is the largest weight in the game suffices.
\end{proof}

Note that the result of Theorem~\ref{thrm_inter} does not hold for arbitrary
strategies as shown in the following lemma.

\begin{lemma}\label{lemm_inf_power}
In generalized mean-payoff games, infinite memory may be necessary to win
(finite-memory strategies may not be sufficient). 
\end{lemma}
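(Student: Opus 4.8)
The plan is to exhibit an explicit two-dimensional generalized mean-payoff game in which Player~1 wins against all Player~2 strategies using an infinite-memory strategy, but loses if restricted to finite memory. The natural source of the gap is the tension between the two objectives $\MP(\pi)(1) \geq 0$ and $\MP(\pi)(2) \geq 0$ when the weights in the two dimensions are ``opposed'': a construction where Player~2 controls a choice that forces Player~1 to accumulate a deficit in one dimension, and the only way to asymptotically cancel that deficit is to spend longer and longer correcting phases, whose length must grow without bound along the play. Because the mean-payoff is a limit (tail) quantity, Player~1 can afford ever-growing corrections as long as the per-block correction eventually dominates, but such a growing-block strategy cannot be realized by a finite Moore machine, whose product with $G$ is a finite graph and hence has an ultimately periodic outcome with a fixed rational mean-payoff vector.

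Concretely, I would build a graph with a Player~2 gadget that can repeatedly elect to ``charge'' dimension~1 negatively (say by routing through an edge of weight $(-1,+c)$ for suitable $c>0$), after which Player~1 must pump a cycle that is good for dimension~1 but costs dimension~2. The key quantitative design goal is to make the mean-payoff vector achievable in the limit (so that with unbounded memory Player~1 can keep both coordinate averages at $0$ by letting the correcting cycle length tend to infinity) while being unachievable by any single ultimately periodic play (so that no finite-memory strategy works). First I would fix the weights so that any fixed periodic response has a strictly negative mean in at least one of the two dimensions, forcing the period to grow; then I would verify that an adaptive Player~1 strategy that lengthens its correction in response to each Player~2 charge drives both coordinate limits to exactly $0$, hence into $\MeanPayoff_G(\{0\}^k)$.

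The two halves of the argument are: (i)~\emph{infinite memory suffices}, which amounts to explicitly describing the growing-block strategy and computing the liminf of each coordinate average to confirm it is $\geq 0$ against every Player~2 behavior; and (ii)~\emph{finite memory fails}, which I would argue via the finite-graph reduction already used in Theorem~\ref{thrm_inter}: for any finite-memory $\lambda_1$, the product $G_{\lambda_1}$ is finite, Player~2 has a memoryless optimal response in it (or one can just reason about reachable simple cycles), and I would show Player~2 can force the play into a reachable cycle whose energy vector is negative in some dimension, giving a strictly negative mean-payoff in that coordinate and thus violating $\MeanPayoff_G(\{0\}^k)$. The reachable-cycle characterization from the proof of Theorem~\ref{thrm_inter} makes this direction clean: finite-memory winning is equivalent to all reachable cycles being nonnegative in every dimension, so I only need to demonstrate that Player~2 can always steer into a coordinate-negative cycle in $G_{\lambda_1}$.

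The main obstacle will be calibrating the weights so that both properties hold simultaneously: the per-phase correction must be strong enough that the \emph{limit} average is exactly $0$ in both coordinates under the growing strategy (requiring a careful telescoping/averaging estimate where the correcting block length grows, e.g., geometrically, so that the running averages converge rather than merely oscillate), yet the structure must force every \emph{fixed}-length periodic response to leave some coordinate strictly negative on the cycle Player~2 can reach. I expect to spend most of the effort on the liminf computation in part~(i), ensuring the averages in both dimensions genuinely converge to $0$ and do not dip below along a subsequence; the finite-memory impossibility in part~(ii) should then follow almost immediately from the reachable simple-cycle argument. I would conclude by remarking that this same example witnesses the separation claimed in the introduction between determinacy under finite memory and determinacy under arbitrary strategies for generalized mean-payoff games.
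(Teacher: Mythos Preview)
Your plan diverges from the paper's approach and, as sketched, carries a real risk of not producing the required gap. The paper uses a purely \emph{one-player} game (every state belongs to Player~1): two states $q_a,q_b$ with self-loops of weight $(2,0)$ and $(0,2)$ respectively, and $(0,0)$-weighted edges between them, against threshold $(1,1)$. The separating phenomenon is not adversarial at all: any ultimately periodic play that visits both states must traverse the $(0,0)$ transitions inside its period, and these contribute to the denominator of the average without contributing to the numerator, forcing $v_1+v_2<2$ on the mean-payoff vector and hence ruling out $(1,1)$. With infinite memory one simply takes longer and longer blocks of each self-loop so that the fraction of $(0,0)$-transitions vanishes and the mean-payoff tends to $(1,1)$ for $\underline{\MP}$ (the paper also observes one can push to $(2,2)$ for $\overline{\MP}$).

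Your proposed mechanism---Player~2 repeatedly ``charges'' one dimension via an edge of weight $(-1,+c)$ and Player~1 pumps a $(+a,-b)$ correction loop---does not by itself force growing phases. If Player~2 charges every round and Player~1 pumps $m$ times per round, the per-round average is essentially $\bigl(\tfrac{-1+ma}{m+2},\tfrac{c-mb}{m+2}\bigr)$; a \emph{fixed} integer $m$ with $1/a\le m\le c/b$ already makes both coordinates nonnegative (so finite memory suffices), and if that interval is empty then varying $m$ across rounds cannot help either, since the long-run constraint on $\tfrac{1}{N}\sum_i m_i$ is the very same interval. In other words, charge-and-correct dynamics give a linear trade-off in which the optimal ratio is attained by a constant strategy. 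The growing-block requirement in the paper's example comes from amortizing the \emph{fixed overhead of zero-weight transition edges}, not from reacting to an adversary; bringing Player~2 into the picture adds complication without manufacturing the gap you want. Your finite-memory impossibility direction via reachable simple cycles in $G_{\lambda_1}$ is the right tool, but it is cleanest to deploy it in the one-player setting where there is no Player~2 response to quantify over.
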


\begin{proof} 
To show this, we first need to define
the mean-payoff vector of arbitrary plays (because arbitrary strategies, i.e.,
infinite-memory strategies, may produce non-ultimately periodic plays). In particular, the limit
of $\frac{1}{n} \cdot \EL(\pi(n))$ for $n \to \infty$ may not exist for arbitrary plays~$\pi$.
Therefore, two possible definitions are usually considered, namely either
$\underline{\MP}(\pi) = \liminf_{n \to \infty} \frac{1}{n} \cdot \EL(\pi(n))$,
or $\overline{\MP}(\pi) = \limsup_{n \to \infty} \frac{1}{n} \cdot \EL(\pi(n))$.
In both cases, better payoff can be obtained with infinite memory:
the example of \figurename~\ref{fig:crazy} shows a game where all states belong to
Player~$1$. We claim that $(a)$ for $\underline{\MP}$, Player~$1$ can achieve
a threshold vector $(1,1)$, and $(b)$ for $\overline{\MP}$,
Player~$1$ can achieve a threshold vector $(2,2)$; $(c)$ if we restrict Player~$1$
to use a finite-memory strategy, then it is not possible to win the
multi mean-payoff objective with threshold $(1,1)$
(and thus also not with $(2,2)$). To prove $(a)$, consider the strategy
that visits $n$ times $q_a$ and then $n$ times $q_b$, and repeats this forever
with increasing value of $n$. This guarantees a mean-payoff
vector $(1,1)$ for $\underline{\MP}$ because in the long-run roughly half of the 
time is spent in $q_a$ and roughly half of the time in $q_b$.
To prove~$(b)$, consider the strategy that
alternates visits to $q_a$ and $q_b$ such that after the $n$th alternation,
the self-loop on the visited state $q$ ($q \in \{q_a,q_b\}$) is taken so
many times that the average frequency of $q$ gets larger than~$\frac{1}{n}$
in the current finite prefix of the play.
This is always possible and achieves threshold $(2,2)$ for $\overline{\MP}$.
Note that the above two strategies require infinite memory. To prove $(c)$,
notice that finite-memory strategies produce an ultimately periodic play
and therefore $\underline{\MP}$ and $\overline{\MP}$ coincide with $\MP$. It is easy to
see that such a play cannot achieve $(1,1)$ because the periodic
part would have to visit both $q_a$ and $q_b$ and then the mean-payoff vector $(v_1,v_2)$
of the play would be such that $v_1 + v_2 < 2$ and thus $v_1 = v_2 = 1$ is
impossible.
\end{proof}

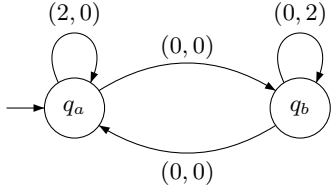
\begin{figure}[!tb]
 \begin{center}
   \begin{picture}(48,28)(0,0)


\node[Nmarks=i, iangle=180](n0)(10,12){$q_a$}
\node[Nmarks=n](n1)(40,12){$q_b$}

\drawloop[ELside=l,loopCW=y, loopdiam=6](n0){$(2,0)$}
\drawloop[ELside=l,loopCW=y, loopdiam=6](n1){$(0,2)$}


\drawedge[ELpos=50, ELside=l, ELdist=0.5, curvedepth=6](n0,n1){$(0,0)$}
\drawedge[ELpos=50, ELside=l, curvedepth=6](n1,n0){$(0,0)$}


\end{picture}
   \caption{A generalized mean-payoff game where infinite memory is necessary to win (Lemma~\ref{lemm_inf_power}).}\label{fig:crazy}
 \end{center}
\end{figure}

\noindent Theorem~\ref{thrm_inter} and Lemma~\ref{lemm_inf_power}, 
along with Theorem~\ref{thrm_gen_energy_fin} gives the following result.

\begin{theorem}\label{thrm_gen_mean}
Generalized mean-payoff games are determined under finite-memory, however 
determinacy and determined under finite-memory do not coincide for generalized mean-payoff 
games.
\end{theorem}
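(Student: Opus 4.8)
The plan is to assemble this theorem entirely from results already established in the excerpt, so the proof will be short and structural rather than computational. The statement has two parts: (i)~generalized mean-payoff games are determined under finite-memory, and (ii)~determinacy and determinacy under finite-memory do \emph{not} coincide for this class. I would prove each part by citing the appropriate earlier result and threading them together through the inter-reducibility of Theorem~\ref{thrm_inter}.

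For part~(i), the key observation is that the mean-payoff threshold problem (for finite memory) is, by Theorem~\ref{thrm_inter}, equivalent to the unknown initial credit problem for the associated generalized energy game (after the log-space shift of weights by the threshold $v$ reducing to threshold $\{0\}^k$). First I would invoke Theorem~\ref{thrm_gen_energy_fin}, which states that generalized energy games are determined under finite-memory and that for energy games determinacy and finite-memory determinacy coincide. So for the energy game either Player~1 has a finite-memory strategy winning for some initial credit, or Player~2 has a (memoryless, hence finite-memory) spoiling strategy. I would then transport this dichotomy back across the reduction of Theorem~\ref{thrm_inter}: a finite-memory winning strategy for Player~1 in the energy game yields, by the cycle argument in that proof, a finite-memory strategy winning for $\MeanPayoff_G(\{0\}^k)$; and conversely a Player~2 spoiling strategy in the energy game forces some reachable simple cycle to be negative in a dimension under every Player~1 finite-memory strategy, which makes the mean-payoff negative in that dimension. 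Hence exactly one of the two players has a finite-memory strategy securing the mean-payoff objective, which is precisely determinacy under finite-memory.

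For part~(ii), I would appeal directly to Lemma~\ref{lemm_inf_power}, which exhibits a one-player generalized mean-payoff game (\figurename~\ref{fig:crazy}) where infinite memory is necessary: Player~1 can achieve threshold $(1,1)$ for $\MPinf$ (and $(2,2)$ for $\MPsup$) with infinite-memory strategies, but no finite-memory strategy achieves $(1,1)$. This immediately witnesses the non-coincidence: under unrestricted (arbitrary) strategies Player~1 wins the threshold-$(1,1)$ objective, so the game is \emph{not} determined in favour of Player~2; yet under finite-memory strategies Player~1 has no winning strategy, so finite-memory determinacy must resolve in Player~2's favour. The two answers therefore differ on the same game, establishing non-coincidence. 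A small point to handle carefully is the definitional one raised in Lemma~\ref{lemm_inf_power}: since arbitrary plays need not have a limit mean-payoff, the separation holds for both the $\MPinf$ and $\MPsup$ readings, so the conclusion is robust to the choice of semantics for infinite-memory plays.

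I expect the main obstacle to be purely expository rather than mathematical: stating part~(i) cleanly requires being explicit that the finite-memory determinacy of the \emph{mean-payoff} game is inherited through the \emph{equivalence} of Theorem~\ref{thrm_inter} together with the fact that Player~2's spoiling strategies on the energy side are memoryless (Lemma~\ref{lem:player-two-memoryless}) and hence finite-memory, so that the witness for the losing player is itself a finite-memory strategy as the definition of determinacy under finite-memory demands. Once that bookkeeping is made precise, both parts follow immediately by citation, and the proof reduces to two or three sentences assembling Theorem~\ref{thrm_inter}, Theorem~\ref{thrm_gen_energy_fin}, and Lemma~\ref{lemm_inf_power}.
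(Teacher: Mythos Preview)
Your proposal is correct and follows exactly the paper's approach: the paper's proof of this theorem is a single sentence assembling Theorem~\ref{thrm_inter}, Lemma~\ref{lemm_inf_power}, and Theorem~\ref{thrm_gen_energy_fin}, and you have merely spelled out in more detail how these three pieces fit together (including the bookkeeping via Lemma~\ref{lem:player-two-memoryless} that Player~2's spoiling strategy on the energy side is memoryless and hence finite-memory).
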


\section{coNP-completeness for Finite-Memory Strategies}

In this section, we present a nondeterministic polynomial time algorithm 
to recognize the instances for which there is no winning strategies for Player~1 
in a multi-energy game. 
First, we show that the one-player version of this game can be solved 
by checking the existence of a circuit (i.e., a not necessarily simple cycle) 
with overall nonnegative effect 
in all dimensions. Second, we build on this and the memoryless result
for Player~2 to define a coNP algorithm. 
The main result (Theorem~\ref{thrm_complete}) is derived from 
Lemma~\ref{lem:coNp-membership} and Lemma~\ref{thrm_hard} below.

\begin{theorem}\label{thrm_complete}
The unknown initial credit and the mean-payoff threshold problems for 
multi-weighted two-player game structures are coNP-complete. 
\end{theorem}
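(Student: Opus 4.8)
The plan is to establish coNP-completeness in two parts: membership in coNP (Lemma~\ref{lem:coNp-membership}) and coNP-hardness (Lemma~\ref{thrm_hard}). Since Theorem~\ref{thrm_inter} shows the unknown initial credit problem and the mean-payoff threshold problem are equivalent under log-space reductions, it suffices to prove coNP-completeness for just one of them, say the unknown initial credit problem, and transfer the result to the other.

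For the coNP-membership, I would argue that we can recognize \emph{negative} instances (where Player~1 loses) in nondeterministic polynomial time. The key structural fact is Lemma~\ref{lem:player-two-memoryless}: Player~2 can spoil with a \emph{memoryless} strategy. So the nondeterministic algorithm first guesses a memoryless strategy $\lambda_2$ for Player~2, which has polynomial size. After fixing $\lambda_2$, the game collapses to a one-player game structure $G_{\lambda_2}$ controlled entirely by Player~1. By the one-player characterization mentioned in the section introduction, Player~1 wins the resulting multi-energy game if and only if there is a reachable circuit (a not necessarily simple cycle) with nonnegative total effect in every dimension. Thus Player~2's guessed strategy is spoiling if and only if \emph{no} reachable circuit in $G_{\lambda_2}$ is nonnegative in all dimensions. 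The remaining obligation is that this last condition---absence of a nonnegative multi-cycle---is checkable in polynomial time; this is the step I expect to be the main technical obstacle, since it amounts to detecting whether a weighted multigraph admits a nonnegative cycle, which can be encoded and decided via a linear-programming / flow formulation over the cycle space. Assuming this polynomial check (to be supplied in Lemma~\ref{lem:coNp-membership}), the overall guess-and-verify procedure places the complement of the problem in NP, hence the problem itself in coNP.

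For coNP-hardness, I would reduce from the complement of 3SAT, i.e., from the coNP-complete problem UNSAT. Given a 3CNF formula $\varphi$, I would build a multi-weighted game structure $G_\varphi$ in which Player~2 selects a truth assignment to the variables (choosing, for each variable, one of two outgoing edges corresponding to true/false), while the dimensions of the weight vectors encode the clauses so that a clause's coordinate can be kept nonnegative by Player~1 precisely when the chosen assignment satisfies that clause. The construction should ensure that Player~1 has a winning (energy-maintaining) strategy for \emph{some} initial credit if and only if \emph{every} assignment chosen by Player~2 fails to satisfy $\varphi$, i.e., if and only if $\varphi$ is unsatisfiable. The delicate point is to wire the weights and the gadget cycles so that satisfaction of all clauses by some assignment translates into a negative circuit that Player~2 can force, defeating Player~1 regardless of initial credit. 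The details of this gadget are deferred to Lemma~\ref{thrm_hard}.

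Combining the two lemmas yields the theorem: the unknown initial credit problem is in coNP by the memoryless-spoiler argument plus the polynomial nonnegative-cycle test, and is coNP-hard by the reduction from UNSAT. Finally, by the log-space inter-reducibility of Theorem~\ref{thrm_inter}, the same coNP-completeness bound holds for the mean-payoff threshold problem, completing the proof. \qed
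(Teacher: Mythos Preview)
Your coNP-membership outline matches the paper's approach: guess a memoryless spoiling strategy for Player~2 (justified by Lemma~\ref{lem:player-two-memoryless}), reduce to a one-player game, and check in polynomial time whether a reachable circuit with nonnegative effect in every coordinate exists. The paper carries out the last step via Lemma~\ref{lem:zero-cycle} (the zero-circuit test of Kosaraju--Sullivan) after adding, at every state, $k$ self-loops of weight $-e_j$; your LP/flow suggestion is a reasonable alternative and not a gap.

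The hardness sketch, however, has a real problem. With the gadget you describe---Player~2 fixes a truth assignment by walking through the variables, one dimension per clause, and ``a clause's coordinate can be kept nonnegative precisely when the chosen assignment satisfies that clause''---Player~1 survives against a fixed assignment exactly when that assignment satisfies \emph{all} clauses. Since Player~2 is adversarial, Player~1 then wins the game iff \emph{every} assignment satisfies $\varphi$, i.e., iff the 3CNF is a tautology, which is decidable in polynomial time and gives no hardness. This directly contradicts your stated target ``Player~1 wins iff $\varphi$ is unsatisfiable''; the two claims cannot both hold for the same gadget, and swapping the players' roles only yields NP-hardness of ``Player~1 wins'' rather than coNP-hardness.

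The paper's reduction (Lemma~\ref{thrm_hard}) avoids this by reversing the picture: Player~1 chooses a \emph{clause}, Player~2 then chooses a \emph{literal} occurring in it, and there is one dimension per \emph{literal} (not per clause), with the edge out of literal $y$ contributing $+1$ to $y$'s coordinate and $-1$ to $\bar y$'s. A satisfying assignment gives Player~2 a consistent memoryless choice that drives some $\bar y$-coordinate to $-\infty$; unsatisfiability forces every memoryless Player~2 strategy to select both $x_k$ and $\bar x_k$ in two different clauses, which Player~1 exploits by alternating between them. You should replace your hardness gadget with this construction (or something equivalent) to close the gap.
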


\smallskip\noindent{\bf coNP upper bound.}
First, we need the following result about finding zero circuits in 
multi-weighted directed graphs (a graph is a one-player game).
A zero circuit is a finite sequence $s_0 s_1 \dots s_n$ such that
$s_0 = s_n$, $(s_i,s_{i+1}) \in E$ for all $0 \leq i < n$, and 
$\sum_{i=0}^{n-1} w(s_i,s_{i+1}) = (0,0,\dots,0)$. The circuit need
not be simple.

\begin{lemma}[\cite{KS88}]\label{lem:zero-cycle}
Determining if a $k$-dimensional directed graph contains a zero circuit
can be done in polynomial time.
\end{lemma}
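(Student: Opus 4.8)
The plan is to reduce the existence of a zero circuit to a \emph{flow} (circulation) feasibility question, to solve the linear-algebraic core by linear programming, and to handle a connectivity side-condition by an iterative edge-elimination fixpoint. First I would recall the correspondence between circuits and nonnegative integer circulations. A circuit $s_0 s_1 \dots s_n$ with $s_0 = s_n$ is a closed walk, i.e.\ an Eulerian multigraph whose underlying edge set is connected; conversely, by Euler's theorem, any nonnegative integer circulation $f : E \to \nat$ (one satisfying flow conservation $\sum_{(u,v) \in E} f(u,v) = \sum_{(v,u) \in E} f(v,u)$ at every vertex $v$) whose support is connected can be traversed as a single closed walk, and the weighted sum $\sum_{e \in E} f(e) \cdot w(e)$ is exactly the energy level of the resulting circuit. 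Hence a zero circuit exists if and only if there is a nonzero nonnegative integer circulation $f$ with connected support and $\sum_{e} f(e) \cdot w(e) = (0,\dots,0)$. Since any nonzero nonnegative \emph{rational} circulation scales to an integer one, I may work over $\rat$ throughout.

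The constraints ``$f$ is a circulation, $f \geq 0$, and $\sum_e f(e) \cdot w(e) = (0,\dots,0)$'' describe a polyhedral cone, and testing whether it contains a point with a prescribed coordinate positive is a polynomial-size linear program. The only genuinely non-linear requirement---and the main obstacle---is that the support of $f$ be \emph{connected}, so that the circulation corresponds to a single circuit rather than to several vertex-disjoint ones. I would resolve this with the following fixpoint. Decompose the current graph into strongly connected components; a connected zero circulation lives inside a single component, so I treat each component separately. Within a component $C$, call an edge $e$ \emph{usable} if the linear program asking for a nonnegative circulation on $C$ with zero weighted sum and $f(e) \geq 1$ is feasible, and delete every non-usable edge. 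Deleting edges may split $C$ into smaller components, so I recompute the strongly connected components and repeat, until a round deletes nothing (or the graph becomes empty).

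For correctness, if the fixpoint leaves a nonempty subgraph, then each surviving strongly connected component $H$ has the property that every one of its edges carries a valid zero-weight circulation; summing these per-edge circulations yields a single nonnegative circulation on $H$ with \emph{full} support, and since $H$ is strongly connected its support is connected, so Euler's theorem produces a zero circuit. Conversely, the edges of any zero circuit form a strongly connected subgraph that lies inside one component at every stage and are witnessed as usable by the circuit's own circulation, hence they are never deleted; thus if a zero circuit exists the fixpoint cannot empty the graph. Termination and polynomiality follow because each round either deletes at least one edge or halts, giving at most $\abs{E}$ rounds, and each round performs one strongly-connected-component decomposition together with one feasibility linear program per edge, all of polynomial size. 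This yields the claimed polynomial-time algorithm.
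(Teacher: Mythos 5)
Your proof is correct; since the paper gives no proof of this lemma but quotes it from \cite{KS88}, the right comparison is with that source, and your argument---zero circuits correspond to zero-weight circulations with connected support via Euler's theorem, edge-usability within a strongly connected component is tested by LP feasibility, and connectedness of the support is enforced by iterating SCC decomposition with deletion of unusable edges---is essentially the Kosaraju--Sullivan algorithm itself. The only cosmetic point is that ``the fixpoint leaves a nonempty subgraph'' should mean ``a subgraph containing at least one edge,'' so that the surviving component $H$ in your forward direction actually yields a circuit.
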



\begin{lemma}\label{lem:coNp-membership}
The unknown initial credit and the mean-payoff threshold problems for 
multi-weighted two-player game structures are in coNP. 
\end{lemma}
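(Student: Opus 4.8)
The plan is to establish coNP membership by giving a nondeterministic polynomial-time procedure that certifies \emph{negative} instances, i.e.\ instances where Player~1 loses. By Theorem~\ref{thrm_inter} the unknown initial credit and mean-payoff threshold problems coincide, so it suffices to treat the energy game. By Lemma~\ref{lem:player-two-memoryless}, whenever Player~1 is losing, Player~2 has a \emph{memoryless} spoiling strategy. This is the key structural fact: the nondeterministic algorithm guesses such a memoryless strategy $\lambda_2$ for Player~2, which is a polynomial-size object (one choice of outgoing edge per Player-2 position). Fixing $\lambda_2$ collapses $G$ to a one-player game structure $G_{\lambda_2}$ (a multi-weighted directed graph) in which only Player~1 moves, and the task reduces to verifying in polynomial time that Player~1 cannot win this one-player energy game from any initial credit.

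The core verification step is therefore: given a one-player multi-weighted graph, decide whether Player~1 (the only player) has \emph{no} initial credit and strategy keeping all energies nonnegative. First I would argue the characterization that Player~1 \emph{wins} the one-player energy game (for some initial credit) if and only if the reachable part of the graph contains a reachable circuit whose total weight is nonnegative in every dimension: a play staying nonnegative forever must, by K\"onig's lemma and the wqo argument of Lemma~\ref{lem:energy-player1-finite-memory}, contain arbitrarily long subwords forming a cycle with nonnegative effect, and conversely reaching such a circuit and looping it forever is winning with sufficient credit. Hence Player~1 \emph{loses} the one-player game iff \emph{every} circuit reachable in $G_{\lambda_2}$ is negative in some dimension. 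The plan is to reduce checking the \emph{absence} of a nonnegative circuit to the zero-circuit detection of Lemma~\ref{lem:zero-cycle}: one can encode ``nonnegative in all dimensions'' via the standard trick of augmenting each dimension with a fresh coordinate that may be freely decremented (adding, for each dimension $j$, the ability to subtract a positive amount), so that a zero circuit in the augmented graph corresponds exactly to a circuit that is nonnegative in all $k$ original dimensions in $G_{\lambda_2}$. Since Lemma~\ref{lem:zero-cycle} runs in polynomial time, the whole verification after the guess is polynomial.

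Putting this together yields the coNP algorithm for the complement: nondeterministically guess the memoryless Player-2 strategy, build $G_{\lambda_2}$, and check in polynomial time that no reachable nonnegative circuit exists; accept the negative instance iff this check succeeds. Correctness in the ``losing'' direction follows because Lemma~\ref{lem:player-two-memoryless} guarantees the witnessing memoryless strategy exists; correctness in the ``winning'' direction follows because for \emph{every} memoryless $\lambda_2$ Player~1 can still reach a nonnegative circuit and win, so no guess passes the test. This shows the problem is in coNP.

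The main obstacle I expect is the polynomial-time one-player verification, specifically the clean reduction to zero-circuit detection and the proof that ``Player~1 wins the one-player energy game for some credit'' is \emph{equivalent} to ``a reachable nonnegative circuit exists.'' The forward direction requires care: one must ensure that an infinite nonnegative-staying play genuinely yields a nonnegative circuit rather than merely a circuit nonnegative in the limit average, which is where the wqo/subsumption reasoning and a Dickson-style pigeonhole over energy levels on a single infinite branch is essential. The reachability restriction (only circuits reachable from $s_{\sf init}$ matter) and encoding the strict per-dimension nonnegativity into the zero-circuit framework without blowing up the dimension $k$ beyond polynomial are the remaining technical points to handle carefully.
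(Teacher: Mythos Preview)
Your proposal is correct and matches the paper's proof essentially step for step: guess a memoryless Player-2 strategy (Lemma~\ref{lem:player-two-memoryless}), characterize Player-1 winning in the resulting one-player graph by the existence of a reachable circuit with nonnegative effect in every dimension via the wqo/Dickson argument, and reduce nonnegative-circuit detection to the zero-circuit problem of Lemma~\ref{lem:zero-cycle} by adding, at every state, $k$ self-loops with weights $(-1,0,\dots,0),\dots,(0,\dots,0,-1)$. The paper is slightly more explicit about this self-loop construction than your ``fresh coordinate'' phrasing, but the argument is otherwise identical.
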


\begin{proof}
By Lemma~\ref{lem:player-two-memoryless}, we know that Player~2 can be restricted 
to play memoryless strategies. A coNP algorithm can guess a memoryless 
strategy $\lambda$ and check in polynomial time that it is winning using the following argument.

First, consider the graph $G_{\lambda}$ as a one-player game (in which all states
belong to player~$1$. We show that if there exists an initial energy level $v_0$ and an 
infinite play $\pi=s_0 s_1 \dots s_n \dots$ in $G_{\lambda}$ such that $\pi \in \PosEnergy(v_0)$ 
then there exist a reachable circuit in $G_{\lambda}$ that has nonnegative effect in all 
dimensions. To show that, we extend $\pi$ with the energy information as follows: 
$\pi'=(s_0,w_0) (s_1,w_1) \dots (s_n,w_n) \dots$ where $w_0=v_0$ and for all 
$i \geq 1$, $w_i=v_0+\EL(\pi(i))$. As $\pi \in \PosEnergy(v_0)$, we know that 
for all $i \geq 0$, $w_i \in \mathbb{N}^k$. So, we can define the following order 
on the pairs $(s,w) \in (S_1 \cup S_2) \times \mathbb{N}^k$ in the run: 
$(s,w) \sqsubseteq (s',w')$ iff $s=s'$ and $w(j) \leq w'(j)$ for all $1 \leq j \leq k$.  
From Lemma~\ref{wqo}, it is easy to show that $\sqsubseteq$ is a wqo. 
Then  there exist two positions $i_1 < i_2$ in $\pi'$ such that 
$(s_{i_1},w_{i_1}) \sqsubseteq (s_{i_2},w_{i_2})$. 
The circuit underlying those two positions has nonnegative effect in all dimensions.

Based on this, we can decide if there exists an initial energy vector $v_0$ and an infinite path 
in $G_{\lambda}$ that satisfies $\PosEnergy_G(v_0)$ using the result of Lemma~\ref{lem:zero-cycle} 
on modified version of $G_{\lambda}$ obtained as follows. 
In every state of $G_{\lambda}$, we add $k$ self-loops with respective multi-weight $(-1,0,\dots,0)$,
$(0,-1,0,\dots,0)$, $\dots$, $(0,\dots,0,-1)$, i.e. each self-loop removes one unit
of energy in one dimension. It is easy to see that $G_{\lambda}$ has a circuit with nonnegative 
effect in all dimensions if and only if the modified $G_{\lambda}$ has a zero circuit, which 
can be determined in polynomial time. The result follows.
\end{proof}

\smallskip\noindent{\bf Lower bound: coNP-hardness.}
We show that the unknown initial credit problem for 
multi-weighted two-player game structures is coNP-hard. 
We present a reduction from the complement of the 3SAT problem 
which is NP-complete~\cite{PapaBook}.

\smallskip\noindent\emph{Hardness proof.}
We show that the problem of deciding whether Player~1 has a winning 
strategy for the unknown initial credit problem for multi-weighted 
two-player game structures is at least as hard as 
deciding whether a 3SAT formula is unsatisfiable.
Consider a 3SAT formula $\psi$ in CNF with clauses $C_1,C_2,\ldots,C_k$ 
over variables $\{x_1, x_2, \ldots, x_n\}$, where each clause consists 
of disjunctions of exactly three literals (a literal is a variable or its
complement). 
Given the formula $\psi$, we construct a game graph as shown in 
Figure~\ref{fig:3sat}.  
The game graph is as follows: from the initial position, Player~1 chooses 
a clause, then from a clause Player~2 chooses a literal  that appears 
in the clause (i.e., makes the clause true).  From every literal 
the next position is the initial position.
We now describe the multi-weight labeling function $w$.
In the multi-weight function there is a component for every literal.
For edges from the initial position to the clause positions, and from 
the clause positions to the literals, the weight for every component 
is~0. 
We now define the weight function for the edges from literals back to the 
initial position: for a literal $y$, and the edge from $y$ to the 
initial position, the weight for the component of $y$ is~$1$, the weight for
the component of the complement of $y$ is~$-1$, and for all the other 
components the weight is~$0$.
We now define a few notations related to assignments of truth values 
to literals.
We consider \emph{assignments} that assign truth values to all the literals.
An assignment is \emph{valid} if for every literal the truth value assigned
to the literal and its complement are complementary (i.e., for all $1 \leq i 
\leq n$, if 
$x_i$ is assigned true (resp. false), then the complement $\overline{x}_i$ 
of $x_i$ is assigned false (resp. true)).
An assignment that is not valid is \emph{conflicting} (i.e., for some 
$1 \leq i \leq n$, both $x_i$ and $\overline{x}_i$ are assigned the same 
truth value). 
If the formula $\psi$ is satisfiable, then there is a valid assignment 
that satisfies all the clauses.
If the formula $\psi$ is not satisfiable, then every assignment that satisfies
all the clauses must be conflicting. 
We now present two directions of the hardness proof.

\begin{figure}[tb]
 \centering
 \begin{picture}(65,40)(0,0)

   \drawcurve(51,33)(47,44)(10,20)
   \drawcurve(51,20)(49,12)(15,15)
   \drawcurve(51,5)(49,-2)(10,10)

   \node[Nmarks=i](q0)(10,15){}
   \node[Nmr=0](q1)(35,33){$C_1$}
   \fmark[fangle=20,flength=10](q1)
   \fmark[fangle=0,flength=10](q1)
   \fmark[fangle=-20,flength=10](q1) 
   \node[Nmr=0](q2)(35,20){$C_2$}
   \fmark[fangle=20,flength=10](q2)
   \fmark[fangle=0,flength=10](q2)
   \fmark[fangle=-20,flength=10](q2) 
   \node[Nmr=0](q3)(35,5){$C_k$}
   \fmark[fangle=20,flength=10](q3)
   \fmark[fangle=0,flength=10](q3)
   \fmark[fangle=-20,flength=10](q3) 

   \drawedge(q0,q1){ }
   \drawedge(q0,q2){ }
   \drawedge(q0,q3){ }

   \put(35,11){$\vdots$}
   \put(53,31){{\Huge $\}$}}
   \put(53,18){{\Huge $\}$}}
   \put(53,3){{\Huge $\}$}}
   \put(58,32){ literal}
   \put(58,19){ literal}
   \put(58,4){ literal}

 \end{picture}
 \caption{Game graph construction for a 3SAT formula (Lemma~\ref{thrm_hard}).}
 \label{fig:3sat}
\end{figure}
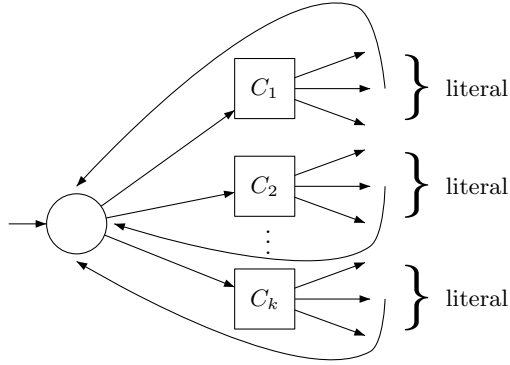

\smallskip\noindent\emph{$\psi$ satisfiable implies Player~2 winning.}
We show that if $\psi$ is satisfiable, then Player~2 has a memoryless
winning strategy.  
Since $\psi$ is satisfiable, there is a valid assignment $A$ that 
satisfies every clause.
The memoryless strategy is constructed from the assignment $A$ as follows:
for a clause $C_i$, the strategy chooses a literal as successor that appears 
in $C_i$ and is set to true by the assignment. 
Consider an arbitrary strategy for Player~1, and the infinite play: 
the literals visited in the play are all assigned truth values true by $A$,
and the infinite play must visit some literal infinitely often. 
Consider the literal $x$ that appears infinitely often in the play, then the 
complement literal $\overline{x}$ is never visited, and every 
time literal $x$ is visited, the component corresponding to $\overline{x}$ 
decreases by~$1$, and since $x$ appears infinitely often it follows that
the play is winning for Player~2 for every finite initial credit.  
It follows that the strategy for Player~2 is winning, and the answer to the 
unknown initial credit problem is ``No".

\smallskip\noindent\emph{$\psi$ not satisfiable implies Player~1 is winning.} 
We now show that
if $\psi$ is not satisfiable, then Player~1 is winning. 
By determinacy, it suffices to show that Player~2 is not winning, 
and by existence of memoryless winning strategy for Player~2 
(Lemma~\ref{lem:player-two-memoryless}), it suffices to show that there is no 
memoryless winning strategy for Player~2. 
Fix an arbitrary memoryless strategy for Player~2, (i.e., in every clause
Player~2 chooses a literal that appears in the clause). 
If we consider the assignment $A$ obtained from the memoryless strategy, then
since $\psi$ is not satisfiable it follows that the assignment $A$ is 
conflicting.
Hence there must exist clause $C_i$ and $C_j$ and variable $x_k$ such that the 
strategy chooses the literal $x_k$ in $C_i$ and the complement variable 
$\overline{x}_k$ in $C_j$. 
The strategy for Player~1 that at the starting position alternates between 
clause $C_i$ and $C_j$, along with that the initial credit of $1$ for the 
component of $x_k$ and $\overline{x}_k$, and~$0$ for all other components, 
ensures that the strategy for Player~2 is not winning. 
Hence the answer to the unknown initial credit problem is ``Yes", and we have 
the following result.

\begin{lemma}\label{thrm_hard}
The unknown initial credit and the mean-payoff threshold problems for 
multi-weighted two-player game structures are coNP-hard. 
\end{lemma}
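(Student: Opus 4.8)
The plan is to prove coNP-hardness by a polynomial-time (indeed log-space) reduction from the complement of 3SAT, which is coNP-complete. Given a 3SAT instance $\psi$ with clauses $C_1,\dots,C_k$ over variables $x_1,\dots,x_n$, I would build the two-player game of Figure~\ref{fig:3sat}: Player~1 first selects a clause, Player~2 then selects one of the three literals occurring in that clause, and every literal returns to $s_{\sf init}$. The multi-weights live in dimension $2n$, with one coordinate per literal; all edges carry weight $0$ except the edge from a literal $y$ back to $s_{\sf init}$, which adds $+1$ to the coordinate of $y$ and $-1$ to the coordinate of its complement $\overline{y}$. The intended meaning is that Player~2 is forced to commit, clause by clause, to a truth assignment, and a coordinate diverges to $-\infty$ exactly when some literal is repeated without its complement ever being chosen. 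I would then prove that the answer to the unknown initial credit problem is {\sc No} if and only if $\psi$ is satisfiable, which gives the reduction.

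For the first direction, suppose $\psi$ is satisfiable via a valid assignment $A$, and let Player~2 play the memoryless strategy that in each clause picks a literal set true by $A$. Against any Player~1 strategy the resulting play visits some literal $x$ infinitely often; since $A$ is valid, $\overline{x}$ is never chosen, so the coordinate of $\overline{x}$ is decremented infinitely often and never incremented, driving it below any fixed initial credit. Hence Player~2 wins for every $v_0 \in \nat^k$ and the answer is {\sc No}.

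For the converse, suppose $\psi$ is unsatisfiable. By determinacy it suffices to show Player~2 has no winning strategy, and by Lemma~\ref{lem:player-two-memoryless} it suffices to defeat every \emph{memoryless} Player~2 strategy. Any such strategy selects one literal per clause and thus induces a total assignment; since $\psi$ is unsatisfiable this assignment must be conflicting, so there are clauses $C_i, C_j$ and a variable $x_k$ with $x_k$ chosen in $C_i$ and $\overline{x}_k$ chosen in $C_j$. Player~1 then alternates forever between $C_i$ and $C_j$: each round contributes $+1$ then $-1$ (in some order) to the coordinates of $x_k$ and $\overline{x}_k$ and $0$ to all other coordinates, so with initial credit $1$ on the coordinates of $x_k$ and $\overline{x}_k$ (and $0$ elsewhere) every energy level stays nonnegative forever. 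Thus this memoryless strategy is spoiled, Player~2 has no winning strategy, and the answer is {\sc Yes}.

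The main obstacle is getting the reverse direction exactly right: the reduction hinges on the conflict-extraction step, which needs the memoryless restriction of Lemma~\ref{lem:player-two-memoryless} so that a fixed assignment can be read off, together with the precise $+1/-1$ cancellation that lets a two-clause alternation neutralise the one conflicting variable pair with only a constant initial credit. Finally, coNP-hardness of the mean-payoff threshold problem follows immediately, since Theorem~\ref{thrm_inter} makes it log-space equivalent to the unknown initial credit problem.
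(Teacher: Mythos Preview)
Your proposal is correct and follows essentially the same approach as the paper: the identical game construction from Figure~\ref{fig:3sat}, the same $2n$-dimensional weight with $+1$ on the chosen literal and $-1$ on its complement, and the same two directions (a satisfying assignment yields a memoryless spoiling strategy for Player~2; unsatisfiability forces every memoryless Player~2 strategy to be conflicting, which Player~1 defeats by alternating between the two witnessing clauses with initial credit $1$ on the conflicting pair). The only cosmetic difference is that you invoke Theorem~\ref{thrm_inter} explicitly for the mean-payoff case, which the paper leaves implicit.
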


\noindent Observe that our hardness proof works with weights restricted to the 
set $\{-1,0,1\}$.

\section{NP-completeness for Memoryless Strategies}
In this section we consider the unknown initial credit and the mean-payoff 
threshold problems for multi-weighted two-player game structures when 
Player~1 is restricted to use memoryless strategies.
We will show NP-completeness for these problems.

\begin{lemma}\label{lemm_memless_1}
The unknown intial credit and the mean-payoff threshold problems for 
multi-weighted two-player game structures for memoryless strategies for 
Player~1 lie in NP.
\end{lemma}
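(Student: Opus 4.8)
The plan is to show membership in NP by exhibiting a polynomial-size certificate (a memoryless strategy for Player~1) together with a polynomial-time verification procedure. First I would observe that a memoryless strategy $\lambda_1$ for Player~1 is simply a choice of one outgoing edge at each position $s \in S_1$, so it can be described in polynomial space and guessed nondeterministically. The nondeterministic algorithm begins by guessing such a $\lambda_1$.

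Next I would reduce the verification step to a one-player problem. Fixing the guessed memoryless $\lambda_1$ yields the graph $G_{\lambda_1}$ in which only Player~2 makes choices; equivalently, by the symmetry of the construction, the question ``does $\lambda_1$ win for some initial credit?'' is the complement of a reachability-of-a-bad-cycle question for Player~2. Concretely, $\lambda_1$ is winning for the unknown initial credit objective if and only if in $G_{\lambda_1}$ (viewed as a Player-2 game, or after again fixing all of Player~2's choices) no reachable circuit has negative effect in some dimension; by the contrapositive of the argument used in Lemma~\ref{lem:coNp-membership} and Theorem~\ref{thrm_inter}, a winning $\lambda_1$ must make every reachable simple cycle nonnegative in all dimensions. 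The key point is that once $\lambda_1$ is fixed, Player~2 can also be taken memoryless by Lemma~\ref{lem:player-two-memoryless}, so the check reduces to deciding whether $G_{\lambda_1}$ contains a reachable circuit that is negative in at least one dimension.

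The verification therefore amounts to a negative-circuit detection, which I would argue is solvable in polynomial time. For each dimension $j$ separately, one can treat the $j$-th weights alone and test, by a standard shortest-path / negative-cycle detection (e.g.\ Bellman--Ford) on the reachable part of $G_{\lambda_1}$, whether there is a reachable cycle with strictly negative total $j$-th weight. If no such cycle exists in any dimension, then every reachable cycle is nonnegative in every coordinate, so a sufficiently large initial credit (for instance $\{nW\}^k$ as in the proof of Theorem~\ref{thrm_inter}) makes $\lambda_1$ winning; if such a cycle exists in some dimension, then $\lambda_1$ is losing regardless of the initial credit. This per-dimension reduction is what makes the check polynomial, since the multi-dimensional nonnegativity of all cycles decomposes into $k$ independent single-dimension negative-cycle tests.

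The one subtlety I expect to be the main obstacle is justifying that checking each dimension independently is sound: a cycle negative in some dimension is fatal to Player~1 \emph{regardless} of its values in the other dimensions, and conversely nonnegativity of every cycle in every dimension separately guarantees global nonnegativity, so the decomposition is valid precisely because the $\PosEnergy$ objective is a conjunction over dimensions. Combining the nondeterministic guess of $\lambda_1$ with the polynomial-time per-dimension verification gives an NP procedure, establishing membership in NP. By the inter-reducibility of Theorem~\ref{thrm_inter}, the same bound transfers to the mean-payoff threshold problem.
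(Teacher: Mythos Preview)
Your proof is correct and takes essentially the same approach as the paper: guess a memoryless strategy $\lambda_1$ for Player~1, then verify in polynomial time by checking, for each dimension separately, that the reachable part of $G_{\lambda_1}$ contains no negative cycle (the paper phrases this as solving $k$ one-dimensional energy/mean-payoff games on a graph with only Player-2 choices, citing~\cite{Karp,BFLMS08,CAHS03}, which amounts to your Bellman--Ford check). Your invocation of Lemma~\ref{lem:player-two-memoryless} is unnecessary---once $\lambda_1$ is fixed, whether Player~2 can reach and pump a cycle negative in some coordinate is a pure graph-reachability/negative-cycle question---but the detour is harmless.
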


\begin{proof}  
The inclusion in NP is obtained as follows: the polynomial witness is the
memoryless strategy for Player~1, and once the strategy is fixed we obtain 
a game graph with choices for Player~2 only. 
The verification problem for the unknown initial credit checks that for every 
dimension there is no negative cycle, and the verification problem for 
mean-payoff threshold checks that for every dimension every cycle satisfy the
threshold condition. 
Both the above verification problem can be achieved in polynomial time by
solving the energy-game and mean-payoff game problem on graphs with choices 
for Player~2 only~\cite{Karp,BFLMS08,CAHS03}.
The desired result follows.
\end{proof}

Lemma~\ref{lemm_memless_2} shows NP-hardness for dimension $k=2$ 
and arbitrary integral weights, and is obtained by a reduction 
from the {\sc Knapsack} problem.
If $k=1$, then the problems reduces to the classical energy and mean-payoff games,
and is in NP $\cap$ coNP~\cite{BFLMS08,CAHS03,ZP96} 
(so the hardness result cannot be obtained for $k=1$). 


\begin{lemma}\label{lemm_memless_2}
The unknown intial credit and the mean-payoff threshold problems for 
multi-weighted two-player game structures for memoryless strategies for 
Player~1 are NP-hard, even in one-player game structures with dimension $k=2$ for
the weight function.
\end{lemma}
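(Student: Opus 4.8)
The plan is to reduce from \textsc{Knapsack}, exploiting the fact that a memoryless strategy in a one-player structure collapses the game to a single lasso-shaped play. First I would record the key reduction of the winning condition to a combinatorial one. Since all states belong to Player~1, fixing a memoryless strategy selects exactly one outgoing edge per state and hence determines a unique play $\pi$; in the gadget I build this play is of the form $(q_0 q_1 \cdots q_n)^\omega$, a single simple cycle repeated forever with no transient prefix. For such a play both problems coincide and reduce to a condition on that cycle: for the unknown initial credit problem, some $v_0$ works iff the total effect $\EL$ of one period is nonnegative in every dimension (a negative component drives that coordinate to $-\infty$, whereas a nonnegative cycle keeps each energy within a bounded band over which a large enough $v_0$ dominates); for the mean-payoff threshold $\{0\}^k$ problem, $\MP(\pi)$ equals the period sum divided by the period length, so $\MP(\pi)\geq 0$ iff again the period sum is nonnegative in every dimension. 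Thus, in the one-player $k=2$ setting, Player~1 has a memoryless winning strategy (for either problem) iff she can select a cycle whose total weight is nonnegative in both coordinates.

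Next I would describe the gadget. Given a \textsc{Knapsack} instance with items $i=1,\dots,n$ of values $v_i$ and weights $w_i$, capacity $C$ and value target $V$, create states $q_0,\dots,q_n$ (all belonging to Player~1, with $q_0$ the initial position) and, for each $i$, two parallel edges from $q_{i-1}$ to $q_i$: a \emph{take} edge labelled $(v_i,-w_i)$ and a \emph{leave} edge labelled $(0,0)$. Close the cycle with a single edge from $q_n$ to $q_0$ labelled $(-V,\,C)$. A memoryless strategy corresponds exactly to a choice of subset $S \subseteq \{1,\dots,n\}$ (the items whose \emph{take} edge is selected), and the resulting cycle has total weight $\big(\sum_{i\in S} v_i - V,\ C - \sum_{i\in S} w_i\big)$. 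This is nonnegative in both coordinates iff $\sum_{i\in S} v_i \geq V$ and $\sum_{i\in S} w_i \leq C$, i.e. iff $S$ witnesses a positive instance of \textsc{Knapsack}.

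Combining the two observations, Player~1 has a memoryless winning strategy in the constructed one-player, two-dimensional structure iff the \textsc{Knapsack} instance is positive. Since the construction is clearly polynomial-time computable (the edge weights are the input numbers written in binary, and there are only $O(n)$ states and edges), and \textsc{Knapsack} is NP-complete, both the unknown initial credit problem and the mean-payoff threshold problem are NP-hard, already for $k=2$ and arbitrary integral weights. I expect the only delicate point to be the first paragraph: one must argue carefully that a memoryless strategy in a one-player game yields a single ultimately periodic play, and that for this play the energy and mean-payoff winning conditions genuinely coincide with the nonnegative-cycle condition --- in particular, that the arbitrary transient allowed by the unknown-initial-credit problem cannot rescue a cycle that is negative in some dimension. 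The combinatorial correctness of the gadget is then routine.
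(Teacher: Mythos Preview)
Your reduction is essentially the same as the paper's: a \textsc{Knapsack} reduction building a single cycle with per-item ``take''/``leave'' choices carrying weights $(p_i,-w_i)$ and $(0,0)$, closed by an edge with weight $(-P,B)$, so that nonnegativity of the cycle sum in both coordinates is exactly the \textsc{Knapsack} condition. One small technical point: in the paper's formalism $E \subseteq S \times S$ and a memoryless strategy picks a successor \emph{state}, so parallel edges between $q_{i-1}$ and $q_i$ are not available; the paper handles this by inserting intermediate states $(i,Y)$ and $(i,N)$, which you should do as well.
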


\begin{proof}   
We present a reduction from the {\sc Knapsack} problem. 
The {\sc Knapsack} problem consists of a set $I=\{1,2, \ldots,n\}$ of 
$n$ items, for each item $i$ there is a profit $p_i \in \mathN$ and a weight 
$w_i \in \mathN$.
Given a weight bound $B$ and profit bound $P$, the {\sc Knapsack} problem 
asks whether there exists a subset $J \subseteq I$ of items such that 
(a)~$\sum_{j \in J} w_j \leq B$; and (b)~$\sum_{j \in J} p_j \geq P$ 
(i.e., a profit of $P$ can be accumulated without exceeding weight $B$).
The {\sc Knapsack} problem is NP-hard~\cite{PapaBook}. 
Our reduction is as follows: given an instance of the {\sc Knapsack} problem 
we construct a one-player game structure with a weight function of dimension 
$2$.
The set of positions is as follows: 
$S_1 =I \cup \{(i,j) | i \in I, j \in\{Y,N\}\} \cup \{n+1\}$ and $S_2 = \emptyset$.
The set of edges is as follows: 
$E=\{(i,(i,Y)), (i,(i,N)) \mid i \in I\} \cup 
\{((i,Y),i+1), ((i,N),i+1) \mid i \in I\} \cup \{(n+1,1)\}.$
Intuitively, in the game structure, for every item Player~1 has a choice of 
``Yes" (edge from $i$ to $(i,Y)$) to select item $i$, and choice of ``No" 
(edge from $i$ to $(i,N)$) to not select item $i$. 
From $(i,Y)$ and $(i,N)$ the next position is $i+1$, and from the position 
$n+1$ the next position is~1.
The weight function function $w:E \to \integ^2$ has two dimensions: 
(a)~for edge $e=(i,(i,N))$ we have $w(e)=(0,0)$ (i.e., for the choice of ``No" 
all the weights are~0);
(b)~for an edge $e=(i,(i,Y))$ we have $w(e)=(p_i,-w_i)$ (i.e., for the choice 
of ``Yes", the first component gains the profit and the second component 
loses the weight of item $i$);
(c)~for an edge $e=((i,Y),i+1)$ or $e=((i,N),i+1)$ we have $w(e)=0$; and 
(d)~for the edge $e=(n+1,1)$ we have $w(e)=(-P,B)$ (i.e., there is a loss of 
$P$ in the first component and a gain of $B$ in the second component).
The construction is illustrated in Fig~\ref{figure:knapsack}.
Given a solution $J$ for the {\sc Knapsack} problem, the memoryless strategy 
that choose $(j,(j,Y))$ for $j \in J$, and $(j',(j',N))$ for 
$j' \in I \setminus J$, with intial credit $(0,B)$ is a solution for the 
unknown initial credit problem.
Conversely, given a memoryless strategy $\lambda_1$ for the unknown initial credit problem,
the set $J=\{j \in I \mid \lambda_1(j)=(j,Y)\}$ is a solution to 
the {\sc Knapsack} problem. 
The argument for the mean-payoff threshold problem is analogous.
The result follows.
\end{proof}

\begin{figure}[ht]
\begin{center}
\unitlength = 3mm
\begin{picture}(53,16)(0,0)
\gasset{Nw=4,Nh=4,Nmr=2}

\node(x2?)(2,8){$1$}
\node(x2)(7,14){$(1,Y)$}
\node(!x2)(7,2){$(1,N)$}
\node(y2?)(12,8){$2$}
\node(y2)(17,14){$(2,Y)$}
\node(!y2)(17,2){$(2,N)$}
\node(z2?)(22,8){$3$}
\node(n2?)(32,8){$n$}
\node(n2)(37,14){$(n,Y)$}
\node(!n2)(37,2){$(n,N)$}
\node(fin2)(42,8){$n+1$}
\drawedge(x2?,x2){$(p_1,-w_1)$}
\drawedge[ELside=r](x2?,!x2){$(0,0)$}
\drawedge[ELside=r](x2,y2?){$(0,0)$}
\drawedge(!x2,y2?){$(0,0)$}
\drawedge(y2?,y2){$(p_2,-w_2)$}
\drawedge[ELside=r](y2?,!y2){$(0,0)$}
\drawedge[ELside=r](y2,z2?){$(0,0)$}
\drawedge(!y2,z2?){$(0,0)$}
\drawedge(n2?,n2){$(p_n,-w_n)$}
\drawedge[ELside=r](n2?,!n2){$(0,0)$}
\drawedge[ELside=r](n2,fin2){$(0,0)$}
\drawedge(!n2,fin2){$(0,0)$}
\drawedge[dash={0.5}0](z2?,n2?){}
\gasset{Nw=5,Nh=5,Nmr=2.5,curvedepth=15}
\drawline(44,8)(45.5,8)
\node[Nframe=n](edgef)(49,8){$(-P,B)$ to~1}
\end{picture}
\end{center}
\caption{{\sc Knapsack} Reduction.}
\label{figure:knapsack}
\end{figure}
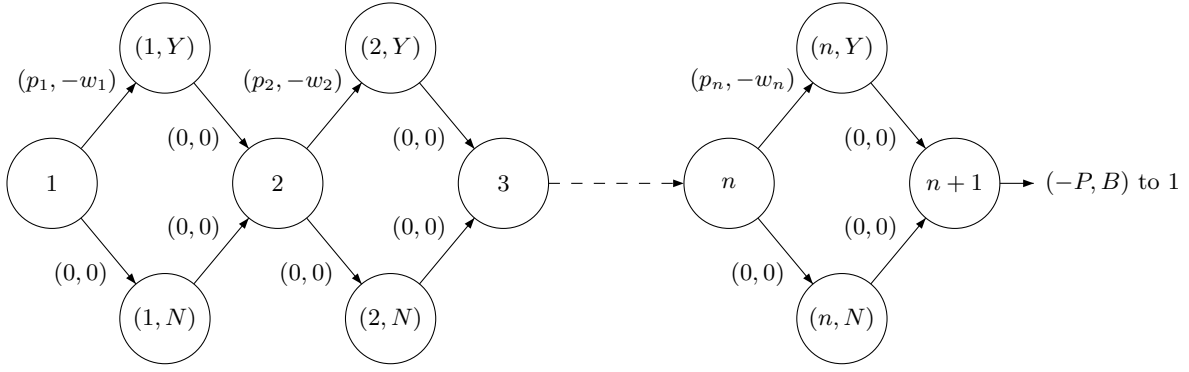

In Lemma~\ref{lemm_memless_3} we show the hardness of the 
problem when the weights are in $\{-1,0,1\}$, but the dimension is arbitrary.
It has been shown in~\cite{Cha10} that if the weights are $\{-1,0,1\}$ 
and the dimension is~2, then the problem can be solved in polynomial time.

\begin{lemma}\label{lemm_memless_3}
The unknown intial credit and the mean-payoff threshold problems for 
multi-weighted two-player game structures for memoryless strategies for 
Player~1 are NP-hard, even in one-player game structures when weights are 
restricted to $\{-1,0,1\}$.
\end{lemma}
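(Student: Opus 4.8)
The plan is to reduce from 3SAT, exploiting the fact that the dimension $k$ may grow with the formula while every weight coordinate stays in $\{-1,0,1\}$. Since the game is one-player ($S_2=\emptyset$), fixing a memoryless strategy $\lambda_1$ leaves a graph of out-degree one, so the outcome from $s_{\sf init}$ is a lasso reaching a single simple cycle. As recalled in the proof of Lemma~\ref{lemm_memless_1}, such a memoryless strategy is winning for the unknown initial credit problem iff this cycle has nonnegative sum in every dimension, and it is winning for the mean-payoff threshold problem with threshold $\{0\}^k$ under exactly the same condition: the transient prefix is bounded, so only the cycle governs both the divergence of the energy level and the value of $\MP$ of the induced ultimately periodic play. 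Hence I would reduce both problems at once to the task of producing, from a 3SAT instance, a one-player $\{-1,0,1\}$-weighted graph whose memoryless strategies correspond to truth assignments and in which the induced cycle is nonnegative in all dimensions iff the assignment satisfies the formula.

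\textbf{Construction.} Let $\psi$ have variables $x_1,\dots,x_n$ and clauses $C_1,\dots,C_m$, and take one dimension per clause, so $k=m$. The graph mimics the chain of Figure~\ref{figure:knapsack}: positions $1,\dots,n,n+1$, where each choice position $i$ has two successors $(i,Y)$ and $(i,N)$ that both lead to $i+1$, and a single edge $(n+1,1)$ closes the loop. All weights are $0$ except the following: the edge $(i,(i,Y))$ carries $+1$ in the coordinate of every clause containing the literal $x_i$; the edge $(i,(i,N))$ carries $+1$ in the coordinate of every clause containing $\overline{x}_i$; and the closing edge $(n+1,1)$ carries $-1$ in every coordinate. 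Each coordinate of each weight vector lies in $\{-1,0,1\}$, as required, even though a single literal edge may contribute $+1$ to several clause coordinates simultaneously.

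\textbf{Correspondence.} A memoryless strategy selects, in each choice position $i$, either $(i,Y)$ or $(i,N)$, which I read as setting $x_i$ to true or false; because the choice is memoryless and each variable owns its own gadget, this always yields a \emph{valid} assignment $A$, so no consistency constraint needs to be enforced. Traversing the unique cycle once, the sum accumulated in the coordinate of clause $C_j$ equals the number of chosen literals occurring in $C_j$ minus $1$ (the single $-1$ from the closing edge). This quantity is at least $0$ iff at least one chosen literal satisfies $C_j$, i.e.\ iff $A$ satisfies $C_j$. Therefore the cycle is nonnegative in all $m$ dimensions iff $A$ satisfies every clause, and consequently Player~1 has a memoryless winning strategy (for either problem) iff $\psi$ is satisfiable. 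The reduction is clearly computable in logarithmic space, and 3SAT is NP-hard~\cite{PapaBook}, yielding NP-hardness with weights in $\{-1,0,1\}$ and arbitrary dimension.

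The step needing the most care is the calibration of the global $-1$ offset per clause: it must be exactly $1$ so that a single satisfied literal already makes the coordinate nonnegative, while an unsatisfied clause forces a strictly negative cycle sum. This is precisely what couples the $\{-1,0,1\}$ weight restriction to a per-clause dimension in order to encode clause satisfaction, and it explains why the hardness must use an unbounded number of dimensions (consistently with the polynomial-time algorithm of~\cite{Cha10} in the two-dimensional $\{-1,0,1\}$ case).
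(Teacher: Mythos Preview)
Your proof is correct and is essentially the same reduction as the paper's: a 3SAT chain with one dimension per clause, $+1$ on a literal edge for each clause it satisfies, and a single $-1$-in-every-coordinate closing edge, so that the unique memoryless cycle is nonnegative in all dimensions iff the induced assignment satisfies every clause. The only cosmetic differences are your labels $(i,Y)/(i,N)$ versus the paper's $(x_i,T)/(x_i,F)$ and your slightly more explicit justification that the same cycle condition settles both the unknown initial credit and the mean-payoff threshold question.
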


\begin{proof} 
We present a reduction from the 3SAT problem. 
Consider a 3SAT formula $\Phi$ over a set $X=\{x_1,x_2,\ldots,x_n\}$ of 
variables, and a set $C_1,C_2,\ldots,C_m$ of clauses such that each 
clause has 3-literals (a literal is a variable or its complement).
We construct a one-player game structure with a weight function of 
dimension $m$ from $\Phi$. 
The set of positions is $S_1 = X \cup \{(x_i,j) \mid x_i \in X, j \in \{T,F\}\} \cup \{x_{n+1}\}$
and $S_2 = \emptyset$.
The set of edges is as follows: 
$E=\{(x_i,(x_i,T)), (x_i,(x_i,F)) \mid x_i \in X\} \cup 
\{((x_i,T),x_{i+1}), ((x_i,F),x_{i+1}) \mid x_i \in X\} \cup \{(x_{n+1},x_1)\}.$
Intuitively, in the game structure, for every variable Player~1 has a choice to set 
$x_i$ as ``True" (edge from $x_i$ to $(x_i,T)$), and choice to set $x_i$ as ``False" 
(edge from $x_i$ to $(x_i,F)$).
From $(x_i,T)$ and $(x_i,F)$ the next position is $x_{i+1}$, and from the 
position $x_{n+1}$ the next position is~$x_1$.
The construction of the graph is similar as in Fig~\ref{figure:knapsack}.
The weight function $w:E \to \integ^m$ has $m$ dimensions: 
(a)~for an edge $e=(x_i,(x_i,T))$ (resp. $e=(x_i,(x_i,F))$) and $1\leq k \leq m$, 
the $k$-th component of $w(e)$ is~1 if the choice $x_i$ as ``True"  (resp. ``False") 
satisfies clause $C_k$, and otherwise the $k$-th component is~0;
(b)~for edges $e=((x_i,j),x_{i+1})$, with $j \in \{T,F\}$, every component of $w(e)$ is~0; and
(c)~for the edge $e=(x_{n+1},x_1)$, for all $1 \leq k \leq m$, the $k$-th component of 
$w(e)=-1$.
If $\Phi$ is satisfiable, then consider a satisfying assignment $A$, and we construct a
memoryless strategy $\lambda_1$ as follows: for a position $x_i$, if $A(x_i)$ is ``True", then 
choose $(x_i,T)$, otherwise choose $(x_i,F)$. 
The memoryless strategy $\lambda_1$ with initial credit vector $\{0\}^m$ ensures that the
answer to the unknown initial credit problem for memoryless strategies is ``Yes".
Conversely, if there is a memoryless strategy $\lambda_1$ for the unknown initial credit 
problem, then the memoryless strategy must satisfy every clause. 
A satisfying assignment $A$ for $\Phi$ is as follows: $A(x_i)$ is ``True" if $\lambda_1(x_i)=(x_i,T)$,
and ``False", otherwise.
It follows that $\Phi$ is satisfiable iff the answer to the unknown initial credit problem 
for memoryless strategies is ``Yes".
The argument for the mean-payoff threshold problem is analogous.
The desired result follows.
\end{proof}

The following theorem follows from the results of Lemma~\ref{lemm_memless_1}, Lemma~\ref{lemm_memless_2}
and Lemma~\ref{lemm_memless_3}.

\begin{theorem}\label{thrm_memless}
The unknown initial credit and the mean-payoff threshold problems for 
multi-weighted two-player game structures for memoryless strategies for 
Player~1 are NP-complete. 
\end{theorem}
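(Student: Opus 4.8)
The plan is to obtain the theorem as a direct combination of matching upper and lower bounds for the memoryless restriction of the two problems. For the upper bound, I would invoke Lemma~\ref{lemm_memless_1}, which places both the unknown initial credit and the mean-payoff threshold problems (with Player~1 restricted to memoryless strategies) in NP. The guiding idea there is that a memoryless strategy for Player~1 is a polynomial-size object that can be guessed nondeterministically, and once it is fixed the residual game retains only Player-2 choices; verifying that the guessed strategy is winning then reduces to a polynomial-time check on the resulting weighted graph, namely the absence of a negative cycle in every dimension for the energy objective, and the satisfaction of the threshold by every cycle in every dimension for the mean-payoff objective, both solvable by the known single-dimension energy/mean-payoff graph algorithms.

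For the lower bound, I would appeal to either of the two NP-hardness reductions already provided. Lemma~\ref{lemm_memless_2} gives NP-hardness already in dimension $k=2$ with arbitrary integer weights, via a reduction from {\sc Knapsack} in which Player~1's Yes/No choices for each item encode membership in the selected subset and the two coordinates track accumulated profit and weight against the bounds $P$ and $B$. Lemma~\ref{lemm_memless_3} gives a complementary NP-hardness even when all weights are restricted to $\{-1,0,1\}$, at the cost of an unbounded dimension, via a reduction from 3SAT in which one coordinate per clause records whether the chosen truth assignment satisfies that clause. Either reduction suffices to establish NP-hardness of both problems, since the mean-payoff argument is analogous to the energy argument in each case.

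Combining these two facts yields NP-completeness: membership in NP from Lemma~\ref{lemm_memless_1}, and NP-hardness from Lemma~\ref{lemm_memless_2} (or, equivalently, Lemma~\ref{lemm_memless_3}). I do not expect any genuine obstacle at the level of the theorem itself, since the substantive work is already isolated in the three lemmas; the only point requiring care is that \emph{both} the unknown initial credit problem and the mean-payoff threshold problem must be covered by each cited result, which indeed holds because every lemma treats the two problems in parallel, the mean-payoff case reducing to the same cycle-based verification as the energy case.
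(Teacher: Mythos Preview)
Your proposal is correct and matches the paper's own proof, which likewise derives the theorem directly from Lemma~\ref{lemm_memless_1} (NP membership) together with Lemma~\ref{lemm_memless_2} and Lemma~\ref{lemm_memless_3} (NP-hardness). Your summary of the content of those lemmas is accurate, and your observation that either hardness lemma alone suffices is also correct.
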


\section{Conclusion}
In this work we considered games with multiple mean-payoff and energy 
objectives, and established determinacy under finite-memory, inter-reducibility of 
these two classes of games for finite-memory strategies, and improved the 
complexity bounds from EXPSPACE to coNP-complete. 

Two interesting problems are open: (A)~for generalized mean-payoff games, 
the winning strategies with infinite memory are more powerful than finite-memory strategies, 
and the complexity of solving generalized mean-payoff games with infinite-memory 
strategies remains open.
(B)~it is not knwon how to compute the exact or 
approximate Pareto curve (trade-off curve) for multi-objective mean-payoff and
energy games.

\smallskip\noindent{\bf Acknowledgement.} 
We are grateful to Jean Cardinal for pointing the reference~\cite{KS88}.

\bibliographystyle{plain}
\bibliography{main}

\end{document}